\newtheorem{thm}{Theorem}
\newtheorem{defn}[thm]{Definition}
\newtheorem{lemma}[thm]{Lemma}
\newtheorem{lem}[thm]{Lemma}
\newtheorem{prop}[thm]{Proposition}
\newtheorem{cor}[thm]{Corollary}
\newcommand{\floor}[1]{\lfloor #1 \rfloor}
\newcommand{\halmos}{\hspace*{\fill}\rule{1ex}{1.4ex}}
\def\newproof#1{\@nprf{#1}}
\def\@nprf#1#2{\expandafter\@ifdefinable\csname #1\endcsname
\global\@namedef{#1}{\@prf{#1}{#2}}\global\@namedef{end#1}{\@endproof}}
\def\@prf#1#2{\@beginproof{#2}{\csname the#1\endcsname}\ignorespaces}
\def\@beginproof#1{\rm \trivlist \item[\hskip \labelsep{\bf #1: }]}
\def\@endproof{\halmos \endtrivlist}
\newproof{proof}{Proof}
\begin{document}

\title{Local Differential Privacy for Physical Sensor Data and Sparse Recovery\\
% \thanks{Identify applicable funding agency here. If none, delete this.}
}
\date{}

\author{Anna C. Gilbert\\ Department of Mathematics\\ University of Michigan \and Audra McMillan \\Department of Mathematics\\University of Michigan}

\iffalse
\author{\IEEEauthorblockN{Anna C.~Gilbert}
\IEEEauthorblockA{\textit{Department of Mathematics} \\
\textit{University of Michigan}\\
Ann Arbor, MI USA \\
annacg@umich.edu}
\and
\IEEEauthorblockN{Audra McMillan}
\IEEEauthorblockA{\textit{Department of Mathematics} \\
\textit{University of Michigan}\\
Ann Arbor, MI USA \\
amcm@umich.edu}
}
\fi

\maketitle

\begin{abstract}
In this work, we exploit the ill-posedness of linear inverse problems to design algorithms to release differentially private data or measurements of the physical system. We discuss the spectral requirements on a matrix such that only a small amount of noise is needed to achieve privacy and contrast this with the ill-conditionedness. We then instantiate our framework with several diffusion operators and explore recovery via $\ell_1$ constrained minimisation. Our work indicates that it is possible to produce locally private sensor measurements that both keep the exact locations of the heat sources private and permit recovery of the ``general geographic vicinity'' of the sources. 
\end{abstract}

%\begin{IEEEkeywords}
%sparse signal recovery, graph diffusion, differential privacy, graph tomography
%\end{IEEEkeywords}

%!TEX root = ./main.tex
\section{Introduction}

Imagine dropping a few drops of ink into a glass of water. The ink drops spread out, forming complicated tendrils that coil back on each other, expanding quickly, until all of the ink has diffused and the liquid is a slightly darker shade than its original colour. %See Figure~\ref{fig:inkdrop} to spur your imagination. 
There is no physical process by which you can make the diffusing ink coalesce \emph{back} into its original droplets. This intuition is at the heart of what we call \textbf{computational cloaking}. Because it is physically impossible to reconstruct the ink droplet exactly, we should be able to hide or keep private in a precise sense its original location. 
When mathematicians and physicists refer to cloaking, they usually mean transformation optics~\citep{Greenleaf2009}, the design of optical devices with special customised effects on wave propagation. In this paper, we exploit the ill-conditionedness of inverse problems to design algorithms to release differentially private measurements of the physical system.

We are motivated by the explosion in the power and ubiquity of lightweight (thermal, light, motion, etc.) sensors. These data offer important benefits to society. For example, thermal sensor data now plays an important role in controlling HVAC systems and minimising energy consumption in smart buildings \citep{Lin:2002, Beltran:2013}. However, these sensors also collect data inside intimate spaces, homes and workspaces, so the information contained in the data is sensitive.
%Simultaneously, we have begun to understand the extent to which our privacy is compromised by allowing this increased level of data collection. The field of privacy-preserving data analytics has developed to help alleviate these privacy concerns \citep{Dwork:2014}. A particular notion of privacy called differential privacy has emerged as a gold standard for privacy. 
To continue with the example of thermal sensor data, one might consider sources of heat to be people, whose locations we aim to keep private. 

Our work indicates that it is possible to produce locally differentially private sensor measurements that both keep the exact locations of the heat sources private and permit recovery of the \emph{general vicinity} of the sources. That is, the locally private data can be used to recover an estimate, $\hat{f}$, that is close to the true source locations, $f_0$, in the Earth Mover Distance (EMD). This is the second aspect to our work: algorithms that reconstruct sparse signals with error guarantees with respect to EMD (rather than the more traditional $\ell_1$ or $\ell_2$ error in which accurate recovery is insurmountable).
%!TEX root = ./main.tex

\subsection{Source Localization}
Suppose that we have a vector $f_0$ of length $n$ that represents the strengths and positions of our ``sources.'' The $i$th entry represents the strength of the source at position $i$. Further, suppose that we take $m$ \textbf{linear} measurements of our source vector; we observe 
\[
	y = M f_0
\]
where $M$ represents some generic linear physical transformation of our original data. Let us also assume that the source vector $f_0$ consists of at most $k$ sources (or $k$ non-zero entries). The straightforward linear inverse problem is to determine $f_0$, given $M$ and a noisy version of $y$. More precisely, given noisy measurements $\tilde y = M f_0 + N(0, \sigma^2 I_m)$, can we produce an estimate $\hat f$ that is still \emph{useful}?

For physical processes such as diffusion, intuitively, we can recover the approximate \emph{geographic vicinity} of the source. This is exactly the concept of \emph{closeness} captured by the Earth Mover Distance (EMD). Thus, in this paper, we aim to recover $\hat{f}$ that is close to $f_0$ in the EMD. The EMD can be defined between any two probability distributions on a finite discrete metric space $(\Omega, d(\cdot, \cdot))$. It computes the amount of \emph{work} required to transform one distribution into the other.

\begin{defn}\citep{Rubner:2000}\label{EMD}
Let $P=\{(x_1, p_1), \cdots, (x_n, p_n)\}$ and $Q=\{(x_1, q_1), \cdots, (x_n, q_n)\}$ be two probability distributions on the discrete space $\{x_1, \cdots, x_n\}$. Now, let 
\begin{equation}\label{EMD}
\mathfrak{f}^* = \arg\min_{\mathfrak{f}\in[0,1]^{n\times n}} \sum_{i=1}^n\sum_{j=1}^n \mathfrak{f}_{ij}d(x_i,x_j)
\end{equation}
\[\text{s.t. } \;\; \mathfrak{f}_{ij}\ge 0 \;\; \forall i,j\in[m],
\;\; \sum_{j=1}^n \mathfrak{f}_{ij}\le p_i \;\; \forall i\in [m],\]
\[ \sum_{i=1}^n \mathfrak{f}_{ij}\le q_i \;\; \forall i\in[n], 
\;\; \text{and}
\;\; \sum_{i=1}^n\sum_{i=1}^n \mathfrak{f}_{ij} = 1.\]
then EMD$(P, Q)= \sum_{i=1}^n\sum_{j=1}^n \mathfrak{f}^*_{ij}d(x_i,x_j).$
\end{defn}

%It can be understood as follows: imagine each probability distribution as representing the depth of dirt on the space then the EMD between the two distributions is the amount of ``effort'' it takes to transform one distribution into the other. The amount of effort depends on both how much dirt is being moved and how far it is being moved. We interpret $\mathfrak{f}_{ij}$ as the amount of dirt that is moved from $x_i$ in $Q$ to $x_j$ in $P$.

\subsection{Differential Privacy}
To understand our definition of cloaking, we give a very brief introduction to differential privacy in this section. A more in-depth introduction can be found in \cite{Dwork:2014}.
Differential privacy has emerged over the past decade as the leading definition of privacy for privacy-preserving data analysis. A database is a vector $D$ in $\mathcal{D}^n$ for some data universe $\mathcal{D}$. We call two databases $D$, $D'$ adjacent or ``neighbouring'' if $\|D-D'\|_0=1$. 

\begin{defn}[$(\epsilon, \delta)$-Differential Privacy]\citep{Dwork:2006}
A randomised algorithm $\mathcal{A}$ is $(\epsilon, \delta)$-differentially private if for all adjacent databases $D$, $D'$ and events $E$, 
\[
	\mathbb{P}(\mathcal{A}(D)\in E)\le e^{\epsilon}\mathbb{P}(\mathcal{A}(D')\in E)+\delta.
\]
\end{defn}

To understand this definition suppose the database $D$ contains some sensitive information about Charlie and the data analyst, Lucy, produces some statistic $\mathcal{A}(D)$ about the database $D$ via a differentially private algorithm. Then Lucy can give Charlie the following guarantee: an adversary given access to the output $\mathcal{A}(D)$ can not determine whether the database was $D$ or $D'$, where $D$ has Charlie's true data and $D'$ has Charlie's data replaced with an arbitrary element of $\mathcal{D}$. 

\subsection{Computational Cloaking Precisely}

First, we clarify exactly what \emph{information} we would like to keep private. We consider the coordinates of $f_0$ to be our data, that is the locations of the sources are what we would like to keep private. We assume that there exists a metric $d(\cdot, \cdot)$ on the set of possible source locations, which induces the EMD on the set of source vectors. For the remainder of this work, we will assume that the metric $d$ is such that every pair of source locations is connected by a path that travels via neighbours.

When the matrix $M$ represents a physical process, we usually cannot hope to keep the \emph{existence} of a source private and also recover an estimation to $f_0$ that is close in the EMD. However, it may be possible to keep the exact location private while allowing recovery of the ``general vicinity" of the source. In fact, we will show in Section \ref{heatonline} that this is possible for diffusion on the discrete 1-dimensional line and in Section~\ref{sec:graphdiffusion} that we can generalise these results to diffusion on a general graph. We are going narrow our definition of ``neighbouring'' databases to capture this idea.

\begin{defn}\label{neighbours}  For $\alpha>0$, two source vectors $f_0$ and $f_0'$ are $\alpha$-\emph{neighbours} if \[\text{EMD}(f_0, f_0')\le \alpha.\]
\end{defn}
The larger $\alpha$ is, the less stringent the neighbouring condition is, so the more privacy we are providing. This definition has two important instances. 
We can move a source of weight 1 by $\alpha$ units, hiding the location of a large heat source (like a fire) within a small area. Also, we can move a source with weight $\alpha$ by 1 unit, hiding the location that small heat source (like a person) over a much larger area.
We will usually drop the $\alpha$ when referring to neighbouring vectors.

A \emph{locally differentially private} algorithm is a private algorithm in which the individual data points are made private before they are collated by the data analyst. In many of our motivating examples the measurements $y_i$ are at distinct locations prior to being transmitted by a data analyst (for example, at the sensors). Thus, the ``local" part of the title refers to the fact that we consider algorithms where each measurement, $y_i$, is made private individually. This is desirable since the data analyst (e.g. landlord, government) is often the entity the consumer would like to be protected against. Also, it is often the case that the data must be communicated via some untrusted channel \citep{Walters:2007, FTC:2015}.  Usually this step would involve encrypting the data, incurring significant computational and communication overhead. However, if the data is made private prior to being sent, then there is less need for encryption. We then wish to use this locally differentially private data to recover an estimate to the source vector that is close in the EMD. The structure of the problem is outlined in the following diagram:

{
\begin{center}
\begin{tikzpicture}
\node at (-0.2,2) {$f_0$};
\draw [->] (0.1,2) -- node[above]{$M$} (0.8,2);
\draw [decoration={brace, raise=5pt},decorate] (1.2, 1.5) -- (1.2, 2.5);
\node at (1.5, 1.5) {$y_m$};
\node at (1.5, 2.5) {$y_1$};
\draw [dotted] (1.5, 1.7) -- (1.5, 2.3);
\draw [->] (1.8, 1.5) -- node[above]{$\mathcal{A}$} (2.6,1.5);
\draw [->] (1.8, 2.5) -- node[above]{$\mathcal{A}$} (2.6, 2.5);
\node at (2.9, 1.5) {$\tilde{y_m}$};
\node at (2.9, 2.5) {$\tilde{y_1}$};
\draw [dotted] (2.9, 1.8) -- (2.9, 2.3);
\draw [decoration={brace, mirror, raise=5pt},decorate] (3.1, 1.5) -- (3.1, 2.5);
\draw [->] (3.5,2) -- node[above]{$R$} (4.1,2);
\node at (4.3, 2) {$\hat{f}$};
\end{tikzpicture}
\end{center}
}

Design algorithms $\mathcal{A}$ and $R$ such that:
\begin{enumerate}
\item (Privacy) For all neighbouring source vectors $f_0$ and $f_0'$, indices $i$, and Borel measurable sets $E$ we have $$\mathbb{P}(\mathcal{A}((Mf_0)_i)\in E)\le e^{\epsilon}\mathbb{P}(\mathcal{A}(M{f_0'})_i\in E)+\delta.$$
\item (Utility) $\text{EMD}(f_0, \hat{f})$ is small.
\end{enumerate}

\subsection{Related Work}
An in-depth survey on differential privacy and its links to machine learning and signal processing can be found in \citep{Sarwate:2013}. The body of literature on general and local differential privacy is vast and so we restrict our discussion to work that is directly related. There is a growing body of literature of differentially private sensor data \citep{Liu:2012, Li:2015, Wang:2016, Jelasity:2014, Eibl:2016}. Much of this work is concerned with differentially private release of aggregate statistics derived from sensor data and the difficulty in maintaining privacy over a period of time (called the continual monitoring problem). 

Connections between privacy and signal recovery have been explored previously in the literature. \cite{Dwork:2007} considered the recovery problem with noisy measurements where the matrix $M$ has i.i.d.~standard Gaussian entries. 
Newer results of \cite{Bun:2014} can be interpreted in a similar light where $M$ is a binary matrix. Compressed sensing has also been used in the privacy literature as a way to reduce the amount of noise needed to maintain privacy \citep{Li:2011, Roozgard:2016}.

There are also several connections between sparse signal recovery and inverse problems \citep{Farmer:2013, Burger:2010, Haber:2008, Landa:2011}. The heat source identification problem is severely ill-conditioned and, hence, it is known that noisy recovery is impossible in the common norms like $\ell_1$ and $\ell_2$. This has resulted in a lack of interest in developing theoretical bounds \citep{Li:2014}, thus the mathematical analysis and numerical algorithms for inverse heat source problems are still very limited. 

To the best of the author's knowledge, the papers that are most closely related to this work are \cite{Li:2014}, \cite{Beddiaf:2015} and \cite{Bernstein:2017}. All these papers attempt to circumvent the condition number lower bounds by changing the error metric to capture ``the recovered solution is geographically close to the true solution", as in this paper. Our algorithm is the same as Li, et al., who also consider the Earth Mover Distance (EMD). Our upper bound is a generalisation of theirs to source vectors with more than one source. Beddiaf, et al.~follows a line of work that attempts to find the sources using $\ell_2$-minimisation and regularisation. In work that was concurrent to ours, Bernstein et al. also considered heat source location, framed as deconvolution of the Gaussian kernel. They proved that a slight variant of Basis Pursuit Denoising solves the problem exactly assuming enough sensors and sufficient separability between sources. They also arrive at a similar result to Theorem \ref{main} for the noisy case \cite[Theorem 2.7]{Bernstein:2017}.
% They assume that the number of sources is known in advance, while our algorithm avoids this assumption by instead minimising the $\ell_1$ norm to promote sparsity. The error metric they consider is different but related to the EMD, it measures the $\ell_2$ distance between the true source locations and the recovered source locations. Experimental results in their paper suggest that their algorithm performs well with noisy data. Their paper however contains no theoretical performance bounds.
%!TEX root = ./main.tex
\section{Privacy of measurements and Ill-conditioned Matrices}

\subsection{The Private Algorithm}

Because we assume that our sensors are lightweight computationally, the algorithm $\mathcal{A}$ is simply for each sensor to add Gaussian noise locally to its own measurement before sending to the perturbed measurement to the central node\footnote{Gaussian noise is not the only option to achieve privacy. There has been some work on the optimal \emph{type} of noise to add to achieve privacy \citep{Geng:2016}.}. The question then is; how much noise should we add to maintain privacy? The following lemma says, essentially, that the standard deviation of the noise added to a statistic should be proportional to how much the statistic can vary between neighbouring data sets. Let $g:\mathcal{D}^n\to\mathbb{R}^n$ be a function and let $\triangle_2 g = \max_{D, D' \text{neighbours}} \|g(D)-g(D')\|_2$ (called the $\ell_2$ sensitivity of $g$). 

\begin{lemma}[The Gaussian Mechanism]\label{gaussianmechanism} \citep{Dwork:2014} Let $\epsilon>0$, $\delta>0$ and $\sigma = \frac{2\ln(1.25/\delta)\triangle_2 g}{\epsilon}$ then $$\mathcal{A}(D) \sim g(D)+N(0, \sigma^2I_n)$$ is an $(\epsilon, \delta)$-differentially private algorithm.
\end{lemma}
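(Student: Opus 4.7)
The plan is to follow the standard privacy-loss-random-variable argument for Gaussian mechanisms, which splits any event into a ``typical'' part (where we can pay an $e^\epsilon$ multiplicative cost) and an atypical tail (whose probability we absorb into $\delta$). The only structural thing specific to this lemma is that the density ratio of two shifted isotropic Gaussians has a very clean closed form, so everything reduces to a one-dimensional Gaussian tail bound.

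First I would fix neighbouring databases $D, D'$ and set $v = g(D) - g(D')$ with $\Delta := \|v\|_2 \le \triangle_2 g$. Because the noise $Z \sim N(0,\sigma^2 I_n)$ is rotationally invariant, I can rotate coordinates so that $v = (\Delta, 0, \ldots, 0)$; then only the first coordinate of $Z$ interacts with the shift and the problem collapses to comparing $N(0,\sigma^2)$ and $N(\Delta,\sigma^2)$ on $\mathbb{R}$. Defining the privacy loss
\[
L(x) = \ln \frac{p_{g(D)+Z}(x)}{p_{g(D')+Z}(x)} = \frac{\|x-g(D')\|_2^2 - \|x-g(D)\|_2^2}{2\sigma^2},
\]
an elementary expansion in the rotated coordinates gives $L(x) = \frac{2\Delta y_1 + \Delta^2}{2\sigma^2}$ where $y = x - g(D)$. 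When $x$ is drawn from $g(D)+Z$, $y_1$ is a mean-zero Gaussian with variance $\sigma^2$, so $L$ itself is Gaussian.

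Next I would show $\mathbb{P}_{x \sim g(D)+Z}(L(x) > \epsilon) \le \delta$. Rearranging, this is the probability that $y_1 > \frac{\epsilon \sigma^2}{\Delta} - \frac{\Delta}{2}$, and applying the standard Mills-ratio Gaussian tail bound to $y_1/\sigma$ gives an upper bound of the form $\tfrac{1}{\sqrt{2\pi}}\,\tfrac{\sigma}{t}\,\exp(-t^2/(2\sigma^2))$. Plugging in the prescribed $\sigma$ in terms of $\epsilon$, $\delta$, and $\triangle_2 g \ge \Delta$, and using the monotonicity of the tail in $\Delta$, I would verify the bound is $\le \delta$. This is the one calculation where the precise constants (here $1.25$ inside the logarithm) matter, and it is the main obstacle: bookkeeping must handle the cases where $\Delta \le \triangle_2 g$ is smaller than the worst case and confirm the tail bound is uniform.

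Finally, for any Borel set $E$, let $E_{\mathrm{bad}} = \{x \in E : L(x) > \epsilon\}$ and split
\[
\mathbb{P}(\mathcal{A}(D) \in E) = \mathbb{P}(g(D)+Z \in E \setminus E_{\mathrm{bad}}) + \mathbb{P}(g(D)+Z \in E_{\mathrm{bad}}).
\]
On $E \setminus E_{\mathrm{bad}}$, the pointwise bound $p_{g(D)+Z}(x) \le e^\epsilon\, p_{g(D')+Z}(x)$ integrates to $e^\epsilon\,\mathbb{P}(\mathcal{A}(D') \in E)$, while the second term is at most $\mathbb{P}(L > \epsilon) \le \delta$ by the previous step. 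Combining these gives the $(\epsilon,\delta)$-differential privacy guarantee, completing the proof.
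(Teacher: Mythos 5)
Your plan is the standard privacy-loss argument, and it is in fact the only proof available here: the paper does not prove this lemma at all, but cites it to Dwork and Roth, whose Appendix A proof is exactly your three steps (reduce to a one-dimensional shifted Gaussian by rotation invariance, bound the tail of the privacy loss, then split an arbitrary event into the set where the loss is at most $\epsilon$ and its complement). The one-sided tail bound $\mathbb{P}(L>\epsilon)\le\delta$ does suffice for the splitting step, and your observation that $t/\sigma=\epsilon\sigma/\Delta-\Delta/(2\sigma)$ is decreasing in $\Delta$, so the worst case is $\Delta=\triangle_2 g$, correctly disposes of the uniformity issue you raise.

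There is, however, a genuine obstruction hiding in the step you deferred (``plugging in the prescribed $\sigma$ \dots I would verify the bound is $\le\delta$''): with $\sigma=c\,\triangle_2 g/\epsilon$ and $c=2\ln(1.25/\delta)$ as in the lemma, the standardized threshold is $t/\sigma=c-\epsilon/(2c)$, which tends to $-\infty$ as $\epsilon$ grows; once $\epsilon$ exceeds roughly $2c\bigl(c-z_{1-\delta}\bigr)$ the tail probability exceeds $\delta$ and no Mills-ratio calculation can rescue it. So the verification cannot succeed for all $\epsilon>0$ as the lemma is stated: the cited theorem in Dwork and Roth carries the hypothesis $\epsilon\in(0,1)$, which the paper silently drops, and your proof must reimport it. Relatedly, the paper's $\sigma$ is missing a square root relative to the source, where $\sigma=\sqrt{2\ln(1.25/\delta)}\,\triangle_2 g/\epsilon$; since $2\ln(1.25/\delta)\ge\sqrt{2\ln(1.25/\delta)}$ for $\delta\le 1.25e^{-1/2}$, the paper's mechanism adds \emph{more} noise than necessary in the regime of interest, so the cleanest way to finish your step is to observe $\sigma\ge\sqrt{2\ln(1.25/\delta)}\,\triangle_2 g/\epsilon$, assume $\epsilon<1$, and invoke the Dwork--Roth calculation verbatim (this is precisely where the constant $1.25$ is consumed), rather than rederiving the tail estimate for the unsquare-rooted constant.
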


Let us apply the Gaussian mechanism to the general linear inverse problem. As we discussed previously, ill-conditioned source localization problems behave poorly under addition of noise. Intuitively, this should mean we need only add a small amount of noise to mask the original data. We show that this statement is partially true. However, there is a fundamental difference between the notion of a problem being ill-conditioned (as defined by the condition number) and being easily kept private. Let $M_i$ be the $i$th column of $M$.

\begin{prop}\label{propstddev}
With $\alpha>0$ and the definition of $\alpha$-neighbours presented in Definition \ref{neighbours}, we have $$\mathcal{A}(Mf_0) \sim Mf_0+\frac{2\log(1.25/\delta)\triangle_2(M)}{\epsilon}N(0,I_m)$$ is a $(\epsilon, \delta)$-differentially private algorithm where
 \[\triangle_2 (M) = \alpha\max_{e_i,e_j\; \text{neighbours}}\|M_i-M_j\|_2\]
\end{prop}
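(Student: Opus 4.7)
The plan is to view this as a direct application of the Gaussian mechanism (Lemma \ref{gaussianmechanism}) to the function $g(f_0) = Mf_0$, with the neighboring relation specified by Definition \ref{neighbours}. The only nontrivial work is then to show that the expression $\alpha \max_{e_i,e_j \text{ neighbours}} \|M_i - M_j\|_2$ upper bounds the $\ell_2$ sensitivity
\[
    \max_{f_0, f_0' \text{ are } \alpha\text{-neighbours}} \|M(f_0 - f_0')\|_2.
\]
Once that bound is in hand, the privacy claim follows by plugging into Lemma \ref{gaussianmechanism}.

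The first step is to decompose $f_0' - f_0$ using an optimal EMD transport plan. For any pair of $\alpha$-neighbours $f_0, f_0'$, let $\mathfrak{f}^*$ be the minimiser in Definition \ref{EMD}, so that $\sum_{i,j}\mathfrak{f}^*_{ij} d(x_i,x_j) = \text{EMD}(f_0, f_0') \le \alpha$. Interpreting $\mathfrak{f}^*$ as a mass-preserving transport (which is valid once one extends the definition from probability vectors to nonnegative vectors of matched total mass, the natural setting for source vectors), one obtains
\[
    f_0' - f_0 = \sum_{i,j} \mathfrak{f}^*_{ij}(e_j - e_i),
\]
where $e_i$ denotes the standard basis vector. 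Linearity of $M$ and the triangle inequality then give
\[
    \|M(f_0' - f_0)\|_2 \le \sum_{i,j} \mathfrak{f}^*_{ij}\, \|M_j - M_i\|_2.
\]

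The second step is to reduce $\|M_j - M_i\|_2$ for arbitrary index pairs to differences between columns at neighbouring locations. This is where the standing assumption that $d$ comes from a neighbour graph (every pair of locations connected by a neighbour-path) is used: fix a shortest such path $i = v_0, v_1, \ldots, v_L = j$ of length $L = d(x_i, x_j)$, telescope $M_j - M_i = \sum_{\ell=1}^L (M_{v_\ell} - M_{v_{\ell-1}})$, and apply the triangle inequality again to obtain $\|M_j - M_i\|_2 \le d(x_i, x_j) \cdot C$, where $C := \max_{e_i, e_j \text{ neighbours}} \|M_i - M_j\|_2$. Substituting back,
\[
    \|M(f_0' - f_0)\|_2 \le C \sum_{i,j} \mathfrak{f}^*_{ij}\, d(x_i, x_j) = C \cdot \text{EMD}(f_0, f_0') \le \alpha C,
\]
which is precisely the stated sensitivity.

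The only subtle point, and the one I would flag as the main obstacle, is the transition from Definition \ref{EMD} (stated for probability distributions with the awkward $\sum \mathfrak{f}_{ij} = 1$ normalisation and inequality marginal constraints) to a genuine mass-preserving transport plan on source vectors that yields the clean identity $f_0' - f_0 = \sum_{i,j}\mathfrak{f}^*_{ij}(e_j - e_i)$. After this identification is justified, the rest is just two applications of the triangle inequality, and the Gaussian mechanism supplies the privacy conclusion.
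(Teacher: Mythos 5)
Your proposal is correct and follows essentially the same route as the paper: decompose $f_0-f_0'$ via the optimal EMD flow, bound $\|Me_k-Me_l\|_2$ by $d(e_k,e_l)$ times the maximum neighbouring-column difference, and invoke the Gaussian mechanism (Lemma \ref{gaussianmechanism}). If anything, you are more careful than the paper, which compresses your telescoping-along-a-neighbour-path step into a single (mislabelled as equality) line and leaves the probability-normalisation subtlety of Definition \ref{EMD} unremarked.
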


\begin{proof}
Suppose $f_0$ and $f_0'$ are $\alpha$-neighbours and let $f_{kl}$ be the optimal flow from $f_0$ to $f_0'$ (as defined in Definition \ref{EMD}) so $f_0=\sum_{k,l}f_{kl}e_k$ and $f_0'=\sum_{k,l}f_{kl}e_l$, where $e_k$ are the standard basis vectors. Then 
\begin{align*}
\|Mf_0-&Mf_0'\|_2 \le \sum_{k,l}f_{kl}\|Me_k-Me_l\|_2\\\
%&\le \sum_{k,l} f_{kl}d(e_k, e_l)\max_{i, j\text{neighbours}}\|Me_i-Me_{j}\|_2\\
&= \left(\sum_{k,l} f_{kl}d(e_k, e_l)\right)\max_{i, j\text{neighbours}}\|Me_j-Me_{i}\|_2\\
&\le \alpha \max_{i, j\text{neighbours}} \|Me_i-Me_{j}\|_2
\end{align*}
Then the fact that the algorithm is $(\epsilon, \delta)$-differentially private follows from Lemma~\ref{gaussianmechanism}. 
\end{proof}

Let $(s_1, \cdots, s_{\min\{n, m\}})$ be the spectrum of $M$, enumerated such that $s_i\le s_{i+1}$. The \emph{condition number}, $\kappa_2(M)$, is a measure of how ill-conditioned this inverse problem is. It is defined as $$\kappa_2(M) := \max_{e,b\in\mathbb{R}^m\backslash\{0\}} \frac{\|M^{+}b\|_2}{\|M^{+}e\|_2}\frac{\|e\|_2}{\|b\|_2} = \frac{s_{\max\{m,n\}}(M)}{s_{1}(M)}$$ where $M^+$ is the pseudo inverse of $M$. The larger the condition number the more ill-conditioned the problem is \cite{Belsley:1980}.  

The following matrix illustrates the difference between how ill-conditioned a matrix is and how much noise we need to add to maintain privacy. Suppose \[M=\begin{pmatrix}
1 & 0\\
0 & \rho
\end{pmatrix}\] where $\rho<1$ is small. While this problem is ill-conditioned, $\kappa_2(M)=1/\rho$ is large, we still need to add considerable noise to the first coordinate of $Mx_0$ to maintain privacy. 

A necessary condition for $\Delta_2(M)$ to be small is that the matrix $M$ is \emph{almost} rank 1, that is, the spectrum should be almost 1-sparse. In contrast the condition that $\kappa_2(M)$ is large is only a condition on the maximum and minimum singular values. The following lemma says that if the amount of noise we need to add, $\triangle_2(M)$, is small then the problem is necessarily ill-conditioned. 

%The difference is, intuitively, that the problem being ill-conditioned means that there \emph{exists} a coordinate that can be hidden well by adding a small amount of noise, while for privacy we need that \emph{all} coordinates are hidden. 

%The converse is however true. That is, if the amount of noise we need to add, $\triangle_2(M)$, is small then the problem is necessarily ill-conditioned. 

\begin{lemma}\label{illconditioned}
Let $M$ be a matrix such that $\|M\|_2=1$ then $$\triangle_2(M)\ge \frac{\alpha}{\kappa_2(M)},$$ where $\alpha$ is the parameter in Definition \ref{neighbours}.
\end{lemma}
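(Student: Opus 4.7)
The plan is to translate the hypothesis $\|M\|_2 = 1$ into $s_{\min(m,n)}(M) = 1$, so that $\kappa_2(M) = 1/s_1(M)$, and then produce a direct lower bound on $\|M_i - M_j\|_2$ for a single pair of neighbouring indices. Concretely, I will pick any pair of standard basis vectors $e_i, e_j$ with $i \neq j$ that are neighbours (such a pair exists because the earlier connectivity assumption on the metric $d$ guarantees that the neighbour graph on source locations has at least one edge).

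The key inequality is the Courant--Fischer characterisation of the smallest singular value: for any $x$ in the row space of $M$,
\[
    \|Mx\|_2 \;\ge\; s_1(M)\,\|x\|_2.
\]
When $\kappa_2(M)$ is finite we may assume $s_1(M) > 0$, and restricting attention to the nontrivial case where $M_i \neq M_j$ (otherwise $e_i-e_j$ lies in $\ker M$ and $s_1(M)=0$, making the claim vacuous), we apply this to $x = e_i - e_j$. Since $\|e_i - e_j\|_2 = \sqrt{2}$, this yields
\[
    \|M_i - M_j\|_2 \;=\; \|M(e_i - e_j)\|_2 \;\ge\; \sqrt{2}\, s_1(M) \;\ge\; s_1(M).
\]

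To conclude, I take the maximum over all neighbouring pairs (which only makes the left-hand side larger), multiply by $\alpha$, and use $s_1(M) = 1/\kappa_2(M)$ (since $\|M\|_2 = s_{\min(m,n)}(M) = 1$) to obtain
\[
    \triangle_2(M) \;=\; \alpha \max_{e_i, e_j\text{ neighbours}} \|M_i - M_j\|_2 \;\ge\; \alpha\, s_1(M) \;=\; \frac{\alpha}{\kappa_2(M)}.
\]

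There is no real obstacle: the argument is essentially one application of the singular-value lower bound to a particular difference of basis vectors. The only subtlety is bookkeeping for degenerate cases (rank-deficient $M$, or pairs $i,j$ with $M_i = M_j$), which I handle by noting that in those cases $\kappa_2(M) = \infty$ and the bound is trivial.
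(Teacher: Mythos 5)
Your proof is correct in the regime where the lemma actually holds, and it takes a genuinely different route from the paper. The paper's proof goes through the Eckart--Young/Schmidt--Mirsky characterisation: with $\|M\|_2=1$, one has $1/\kappa_2(M)=s_1(M)=\min_{\mathrm{rank}\,E<\min\{m,n\}}\|M-E\|_2$, and the paper exhibits a specific rank-deficient competitor by duplicating one column of a neighbouring pair, so that the perturbation $M-E$ is supported on a single column and has spectral norm exactly $\|M_i-M_j\|_2$. You instead bound $\|M_i-M_j\|_2=\|M(e_i-e_j)\|_2$ from below directly by the smallest singular value via $\|Mx\|_2\ge s_1(M)\|x\|_2$. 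Your argument is more elementary (no low-rank approximation theorem), it applies verbatim to \emph{every} pair $i\neq j$ rather than just neighbours, and since $\|e_i-e_j\|_2=\sqrt{2}$ it yields the slightly stronger conclusion $\triangle_2(M)\ge\sqrt{2}\,\alpha/\kappa_2(M)$. What the paper's formulation buys is conceptual: it phrases the bound as ``small $\triangle_2(M)$ forces $M$ to be spectral-norm close to a rank-deficient matrix,'' which dovetails with the surrounding discussion (Lemma \ref{spectrum}) that $M$ must be nearly rank one.

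One caveat, which your proof shares with the paper's rather than introduces: both arguments implicitly require $n\le m$. Your inequality $\|Mx\|_2\ge s_1(M)\|x\|_2$ is justified only for $x$ in the row space of $M$, and when $n>m$ the vector $e_i-e_j$ can have a nontrivial kernel component even though $s_1(M)>0$; your fallback claim that $M_i=M_j$ forces $s_1(M)=0$ fails there for the same reason (a wide matrix always has a kernel). Indeed the lemma itself is false for wide matrices under the paper's conventions: $M=\tfrac{1}{\sqrt{2}}\begin{pmatrix}1 & 1\end{pmatrix}$ has $\|M\|_2=1$ and $\kappa_2(M)=1$, yet $\triangle_2(M)=0$. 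The paper's proof has the identical blind spot, since a matrix with two equal columns need not have rank below $\min\{m,n\}=m$ when $n>m$. So this is a latent hypothesis of the statement, not a defect of your approach; for $n\le m$ your degenerate-case bookkeeping is exactly right, as there a nontrivial kernel is equivalent to $s_1(M)=0$, i.e., $\kappa_2(M)=\infty$.
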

\begin{proof}
Suppose $e_i$ is a neighbouring source to $e_1$ then
\begin{align*}
\frac{1}{\kappa_2(M)} &= \min_{\text{rank}E<\min\{m, n\}} \|M-E\|_2\le \|M-[M_j \;M_2\; M_3 \cdots M_{n}]\|_2\le \|M_1-M_{i}\|_2. 
\end{align*}
Since we could have replaced 1 and $i$ with any pair of neighbours we have \[\frac{1}{\kappa_2(M)}\le \min_{i, j\text{neighbours}}\|M_i-M_j\|_2 =\frac{\Delta_2(M)}{\alpha}.\]
\end{proof}

The following lemma gives a characterization of $\triangle_2(M)$ in terms of the spectrum of $M$. It verifies that the matrix $M$ must be \emph{almost} rank 1, in the sense that the spectrum should be dominated by the largest singular value. %Let $s_1\le\cdots\le s_{\min\{n,m\}}$ be the singular values of $M$.

\begin{lemma}\label{spectrum}
If $\Delta_2(M)\le\nu$, then $|\|M_i\|_2-\|M_{j}\|_2|~\le~\frac{\nu}{\alpha}$ for any pair of neighbouring locations $e_i$ and $e_j$ and $|\sum_{i\neq \min\{m,n\}}s_i|\le~\frac{(n+1)^{3/2}\rho\nu}{\alpha}$, where $\rho$ is the diameter of the space of source locations. 

Conversely, if $|\sum_{i\neq~\min\{m,n\}}s_i|~\le~\frac{\nu}{\alpha}$ and $|\|M_i\|_2~-~\|M_{j}\|_2|~\le~\frac{\nu}{\alpha}$ then $\Delta_2(M)\le~4\nu$.
\end{lemma}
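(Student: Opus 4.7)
The plan is to treat the two directions separately, since they have quite different flavors.

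For the forward direction, the column-norm statement is a one-liner: if $\Delta_2(M)\le \nu$, then by definition $\|M_i-M_j\|_2\le \nu/\alpha$ for every neighboring pair, and the reverse triangle inequality gives $|\|M_i\|_2-\|M_j\|_2|\le \nu/\alpha$. For the spectral-tail bound, I would exploit the standing assumption that every pair of source locations is connected by a path through neighbors, of length at most the diameter $\rho$. Chaining triangle inequalities along such a path gives $\|M_i-M_j\|_2\le \rho\,\nu/\alpha$ for \emph{all} pairs $i,j$, not only neighbors. This lets me build an explicit rank-one approximation of $M$: take $\tilde M:=M_1\mathbf{1}^{T}$, so that
\[
\|M-\tilde M\|_F^2 \;=\; \sum_{j=1}^{n}\|M_j-M_1\|_2^{2}\;\le\; n\,\rho^{2}\nu^{2}/\alpha^{2}.
\]
Now Eckart–Young says the best rank-one Frobenius approximation error is exactly $\sum_{i\neq \min\{m,n\}} s_i^{2}$, which is therefore bounded by the right-hand side above. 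Finally, Cauchy–Schwarz converts this $\ell_2$-type control on the tail spectrum into the $\ell_1$-type bound $\sum_{i\neq\min\{m,n\}} s_i\le \sqrt{\min\{m,n\}-1}\cdot\sqrt{n}\,\rho\nu/\alpha$, which is absorbed into the stated $(n+1)^{3/2}\rho\nu/\alpha$ with room to spare.

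For the converse, I would work from the SVD $M = s_{\max} u\,v^{T}+E$ where $u=u_{\max}$, $v=v_{\max}$, and $E$ collects the lower-rank components. The hypothesis $\sum_{i\neq\min\{m,n\}} s_i\le \nu/\alpha$ controls both $\|E\|_{\mathrm{op}}$ (at most $s_{\text{second largest}}$) and, via $\|E\|_F^2=\sum_{i\neq\min\{m,n\}}s_i^{2}\le (\nu/\alpha)^{2}$, the per-column norms $\|E_i\|_2$. For a neighboring pair, decompose
\[
M_i - M_j \;=\; s_{\max}(v_i-v_j)\,u \;+\; (E_i-E_j),
\]
and bound $\|E_i-E_j\|_2\le \sqrt{2}\,\|E\|_{\mathrm{op}}\le \sqrt{2}\,\nu/\alpha$. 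For the principal term, relate $|s_{\max}v_i|$ to $\|M_i\|_2$ by $\bigl|\,|s_{\max}v_i|-\|M_i\|_2\,\bigr|\le \|E_i\|_2\le \nu/\alpha$, and similarly for $j$; the column-norm hypothesis $|\|M_i\|_2-\|M_j\|_2|\le \nu/\alpha$ then yields $s_{\max}\bigl|\,|v_i|-|v_j|\,\bigr|\le 3\nu/\alpha$. Combining gives $\|M_i-M_j\|_2\le 4\nu/\alpha$, hence $\Delta_2(M)\le 4\nu$.

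The main obstacle is the sign ambiguity in the converse: the column-norm hypothesis controls $|\,|v_i|-|v_j|\,|$, but I actually need $|v_i-v_j|$. The clean way to close this gap is to observe that once $\|E\|_{\mathrm{op}}\le \nu/\alpha$ is small, the rank-one skeleton $s_{\max}uv^{T}$ inherits the sign structure of $M$'s columns on neighbors (for the physically motivated non-negative $M$ of the paper, Perron–Frobenius already guarantees a sign-consistent $v_{\max}$). Absent that, the difference $|v_i-v_j|=|v_i|+|v_j|$ when signs disagree can still be bounded via $\|M_i+M_j\|_2$ together with a column-magnitude estimate, and the resulting slack is exactly what the factor $4$ (rather than a tighter constant) is absorbing. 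Tracking this carefully is the only non-routine step in the proof.
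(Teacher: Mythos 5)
Your forward direction is correct and takes a genuinely different route from the paper on the spectral tail. The paper also builds the rank-one comparison matrix $M'=[M_1\,\cdots\,M_1]$, but it works with the trace norm: it chains $|\sum_{i\neq\min\{m,n\}}s_i|\le |\|M\|_{tr}-\|M'\|_{tr}|+|\|M'\|_{tr}-\|M\|_2|\le(\sqrt{\min\{m,n\}}+1)\|M-M'\|_F$ and then bounds $\|M-M'\|_F\le\rho(n-1)\nu/\alpha$ column by column. You instead invoke Eckart--Young to identify $\sum_{i\neq\min\{m,n\}}s_i^2$ with the optimal rank-one Frobenius error, bound it by $\|M-M_1\mathbf{1}^T\|_F^2\le n\rho^2\nu^2/\alpha^2$, and convert with Cauchy--Schwarz. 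Both arguments rest on the same chaining step ($\|M_i-M_j\|_2\le\rho\nu/\alpha$ for arbitrary pairs, using the standing path-connectedness assumption); yours is slightly tighter (order $n\rho\nu/\alpha$ versus the paper's $n^{3/2}\rho\nu/\alpha$, both within the stated $(n+1)^{3/2}$) and avoids trace-norm inequalities entirely.

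For the converse, your decomposition $M_i-M_j=s_{\max}(v_i-v_j)u+(E_i-E_j)$ is exactly the paper's SVD argument, and the sign ambiguity you flag is a real issue that the paper silently steps over: its displayed inequality drops the absolute values on $(U_1)_i$, $(U_1)_j$, which is only legitimate when the top singular vector is sign-consistent. However, your proposed general-case patch via $\|M_i+M_j\|_2$ cannot work, because nothing in the hypotheses controls that quantity, and in fact the converse as stated is \emph{false} for sign-indefinite $M$: take $M=uv^T$ rank one with $v_i=-v_j=1/\sqrt{2}$ on a neighbouring pair $i,j$ and all other entries zero; then the tail spectrum vanishes and $\|M_i\|_2=\|M_j\|_2$, so both hypotheses hold with $\nu$ arbitrarily small, yet $\|M_i-M_j\|_2=\sqrt{2}\,s_{\max}$, so $\Delta_2(M)$ stays bounded away from zero. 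The only viable resolution is the one you mention first: restrict to matrices whose top singular vector has consistent signs, e.g.\ entrywise nonnegative $M$ via Perron--Frobenius, which covers the paper's diffusion matrices. One further cosmetic point: your accounting gives $(3+\sqrt{2})\nu/\alpha$, which exceeds the claimed $4\nu/\alpha$; the paper hides the same slack by bounding the tail term $\sum_{k\ge 2}s_k|(U_k)_i-(U_k)_j|$ by $\nu/\alpha$ even though $|(U_k)_i-(U_k)_j|$ can reach $\sqrt{2}$, so neither derivation literally produces the constant $4$, and this does not affect how the lemma is used.
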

\begin{proof} Let $e_i$ and $e_j$ be neighbouring sources.
Now, assume $\Delta_2(M)\le \nu$ then \[|\|M_i\|_2~-~\|M_{j}\|_2|~\le~\|M_i~-~M_{j}\|_2~\le~\frac{\nu}{\alpha}.\] Suppose wlog that $\max_i\|M_i\|_2=\|M_1\|_2$ and let $M' = [M_1\cdots M_1]$ be the matrix whose columns are all duplicates of the first column of $M$. Recall that the trace norm of a matrix is the sum of its singular values and for any matrix, $\|M\|_{tr}\le \sqrt{\min\{m,n\}}\|M\|_F$ and $\|M\|_2\le\|M\|_F$. Since $M'$ is rank 1, $\|M'\|_{tr} = \|M'\|_2 = s_{\min\{m,n\}}$% $\sigma_{max}(M') = \|M'\|_2 = \|M'\|_F = \sqrt{n}\|M_1\|_2$. 
, thus, 
\begin{align*}
|\sum_{i=1}^{\min\{m,n\}-1}s_i| &\le |\|M\|_{tr}-\|M'\|_{tr}|+|\|M'\|_{tr}-s_{\min\{m,n\}}|\\
&\le \|M'-M\|_{tr}+|\|M'\|_2-\|M\|_2|\\
&\le (\sqrt{\min\{n,m\}}+1)\|M'-M\|_F\\
&\le(\sqrt{\min\{n,m\}}+1)\rho(n-1)\frac{\nu}{\alpha}
\end{align*} 
Conversely, suppose $|\sum_{i\neq {\min\{m,n\}}}s_i|\le \frac{\nu}{\alpha}$ and $|\|M_i\|_2-\|M_{j}\|_2|\le \frac{\nu}{\alpha}$. Using the SVD we know, $M=\sum s_i U_i\otimes V_i$ where $U_i$ and $V_i$ are the left and right singular values, respectively. Thus, 
\begin{align*}
\|M_i-M_{j}\|_2 &= \|\sum_k s_k(U_k)_iV_k-\sum_k s_k(U_k)_{j}V_k\|_2\\
&\le s_{\min\{m,n\}}|(U_1)_i-(U_1)_{j}|+\frac{\nu}{\alpha}.
\end{align*}
Also, $\frac{\nu}{\alpha}\ge \|M_i\|_2-\|M_{j}\|_2\ge s_{\min\{m,n\}}(U_1)_i-\frac{\nu}{\alpha}-s_{\min\{m,n\}}(U_1)_{j}-\frac{\nu}{\alpha}$ so\\ $|(U_1)_i-(U_1)_{j}|\le 3\frac{\nu}{\alpha s_{\min\{m,n\}}}$.
\end{proof}
%!TEX root = ./main.tex

\subsection{Backdoor Access via Pseudo-randomness}\label{backdoor}

It has been explored previously in the privacy literature that replacing a random noise generator with cryptographically secure pseudorandom noise generator in an efficient differentially private algorithm creates an algorithm that satisfies a weaker version of privacy, computational differential privacy \citep{Mironov:2009}. While differential privacy is secure against \emph{any} adversary, computational differential privacy is secure against a \emph{computationally bounded} adversary. In the following definition, $\kappa$ is a \emph{security} parameter that controls various quantities in our construction. 

\begin{defn}[Simulation-based Computational Differential Privacy (SIM-CDP)] \citep{Mironov:2009}
A family, $\{\mathcal{M}_{\kappa}\}_{\kappa\in\mathbb{N}}$, of probabilistic algorithms $\mathcal{M}_{\kappa}:\mathcal{D}^n\to\mathcal{R}_{\kappa}$ is $\epsilon_n$-SIM-CDP if there exists a family of $\epsilon_{\kappa}$-differentially private algorithms $\{\mathcal{A}_n\}_{n\in\mathbb{N}}$, such that for every probabilistic polynomial-time adversary $\mathcal{P}$, every polynomial $p(\cdot)$, every sufficiently large $\kappa\in\mathbb{N}$, every dataset $D\in\mathcal{D}^n$ with $n\le p(\kappa)$, and every advice string $z_{\kappa}$ of size at $p(\kappa)$, it holds that, \[|\mathbb{P}[\mathcal{P}_{\kappa}(\mathcal{M}_{\kappa}(D)=1)-\mathbb{P}[\mathcal{P}_{\kappa}(\mathcal{A}_{\kappa}(D)=1)|\le \text{negl}(\kappa).\] That is, $\mathcal{M}_{\kappa}$ and $\mathcal{A}_{\kappa}$ are computationally indistinguishable.
\end{defn}

The transition to pseudo-randomness, of course, has the obvious advantage that pseudo-random noise is easier to generate than truly random noise. In our case, it also has the additional benefit that, given access to the seed value, pseudo-random noise can be removed, allowing us to build a ``backdoor" into the algorithm. Suppose we have a trusted data analyst who wants access to the most accurate measurement data, but does not have the capacity to protect sensitive data from being intercepted in transmission. Suppose also that this party stores the seed value of each sensor and the randomness in our locally private algorithm $\mathcal{A}$ is replaced with pseudo-randomness. Then, the consumers are protected against an eavesdropping computationally bounded adversary, and the trusted party has access to the noiseless \footnote{This data may still be corrupted by sensor noise that was not intentionally injected} measurement data. This solution may be preferable to simply encrypting the data during transmission since there may be untrusted parties who we wish to give access to the private version of the data. 

\begin{cor}[Informal] Replacing the randomness in Proposition \ref{propstddev} with pseudo-randomness produces a  local simulation-based computational differentially private algorithm for the same task. In addition, any trusted party with access to the seed of the random number generator can use the output of the private algorithm to generate the original data.
\end{cor}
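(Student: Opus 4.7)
The plan is to treat the two claims separately, and in both cases the argument reduces to the defining property of a cryptographically secure pseudo-random generator (CSPRNG).

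For the SIM-CDP guarantee, I would take the ``simulator'' family $\{\mathcal{A}_\kappa\}$ to be exactly the truly-random Gaussian Mechanism of Proposition \ref{propstddev}, which is already $(\epsilon,\delta)$-differentially private for every database. Let $\mathcal{M}_\kappa$ denote the modified algorithm in which each sensor's Gaussian draw is obtained by stretching a seed of length $\kappa$ through the CSPRNG and mapping the resulting bits to (a discretization of) $N(0,\sigma^2)$ by, e.g., inverse-CDF sampling. Computational indistinguishability on every fixed $D$ then follows from a standard reduction: any probabilistic polynomial-time adversary $\mathcal{P}_\kappa$ distinguishing $\mathcal{M}_\kappa(D)$ from $\mathcal{A}_\kappa(D)$ can be turned into a distinguisher for pseudo-random versus truly random bit strings by internally simulating the deterministic steps (sampling from bits, applying $M$, adding the noise to $Mf_0$). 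Security of the CSPRNG forces $\mathcal{P}_\kappa$'s advantage to be negligible in $\kappa$, which is exactly the SIM-CDP condition. Since the noise is added locally at each sensor and each sensor can use its own independent seed, the same reduction also yields the \emph{local} version of SIM-CDP.

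For the backdoor claim, I would simply observe that each sensor's output has the form $\tilde y_i = (Mf_0)_i + \eta_i$, where $\eta_i$ is a \emph{deterministic} function of the sensor's seed $s_i$ and the publicly specified CSPRNG. A trusted party holding $s_i$ can recompute $\eta_i$ bit for bit, subtract it from $\tilde y_i$, and recover $(Mf_0)_i$ exactly; doing this at every sensor returns the full noiseless measurement vector $Mf_0$.

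The main obstacle, and the reason the statement is labelled informal, is the technical gap between CSPRNGs (which output uniform bits) and Gaussian noise: one must sample from a discretized or rounded Gaussian, and then verify both that Proposition \ref{propstddev} still holds for the discretized mechanism (up to a slight adjustment of $\sigma$) and that the bit-to-sample mapping preserves indistinguishability under CSPRNG substitution. Both are standard but finicky; I would handle them by appealing to the existing literature on discrete Gaussian differentially private mechanisms rather than redoing the analysis here.
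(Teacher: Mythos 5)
Your proposal is correct and matches the paper's (implicit) argument: the paper offers no proof of this informal corollary beyond citing Mironov et al.\ for the fact that substituting cryptographically secure pseudorandomness into an efficient differentially private mechanism yields SIM-CDP, together with the observation that a party holding the seed can recompute and subtract the injected noise --- which is exactly your reduction-plus-backdoor sketch, with your discretization caveat being precisely the technicality the ``Informal'' label is excusing. One definitional wrinkle worth flagging: as quoted, the paper's SIM-CDP definition demands pure $\epsilon_{\kappa}$-differentially private simulators, whereas the Gaussian mechanism of Proposition \ref{propstddev} is only $(\epsilon,\delta)$-differentially private, so your simulator family formally needs the $(\epsilon,\delta)$ variant of the definition (available in Mironov et al.) rather than the one stated in the paper.
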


%!TEX root = ./main.tex
\section{Recovery algorithm and Examples}\label{recoveryalgorithm}

We claimed that the private data is both useful and differentially private. In this Chapter we discuss recovering an estimate of $f_0$ from the noisy data $\tilde{y}$. Algorithms for recovering a sparse vector from noisy data have been explored extensively in the compressed sensing literature. However, theoretical results in this area typically assume that the measurement matrix $M$ is sufficiently nice. Diffusion matrices are typically very far from satisfying the \emph{niceness} conditions required for current theoretical results. Nonetheless, in this Chapter we discuss the use of a common sparse recovery algorithm, Basis Pursuit Denoising (BPD), for ill-conditioned matrices. The use of BPD to recover source vectors with the heat kernel was proposed by \cite{Li:2014}, who studied the case of a 1-sparse source vector.

We begin with a discussion of known results for BPD  from the compressed sensing literature. 
While the theoretical results for BPD do not hold in any meaningful way for ill-conditioned diffusion matrices, we present them here to provide context for the use of this algorithm to recover a \emph{sparse} vector. We then proceed to discussing the performance of BPD on private data in some examples: diffusion on the 1D unit interval and diffusion on general graphs.

\subsection{Basis Pursuit Denoising}

%We claimed that the private data is both useful and privacy-preserving. Algorithms for recovering a sparse vector have been explored extensively in the compressed sensing literature. However, theoretical results in this area typically assume that the measurement matrix $M$ has special properties, such as satisfying the restricted isometry property (RIP)~\citep{Donoho:2006}. The use of it to recover source vectors with the heat kernel was proposed by \cite{Li:2014}, who studied the case of a 1-sparse source vector. Diffusion matrices do not satisfy any of the properties required for known theoretical results about Basis Pursuit Denoising to hold. However we will see in the section that for the example of diffusion on the unit interval and on graphs, this algorithm seems to work well.

Basis Pursuit Denoising minimises the $\ell_1$-norm subject to the constraint that the measurements of the proposed source vector $\hat{f}$ should be close in the $\ell_2$-norm to the noisy sensor measurements. To simplify our discussion, let $\sigma$ be the standard deviation of the noise added to the sensor measurements. The bound $\sigma\sqrt{m}$ in Algorithm \ref{nonprivatealgo} is chosen to ensure $f_0$ is a feasible point with high probability. 

\begin{algorithm}
   \caption{$R$: Basis Pursuit Denoising}
   \label{nonprivatealgo}
\begin{algorithmic}
   \STATE {\bfseries Input:} $M, \sigma>0, \tilde{y}$
   \STATE $\hat{f} = \arg\min_{f\in[0,1]^n}\|f\|_1 \;\; \text{ s.t. } \|Mf-\tilde{y}\|_2\le \sigma\sqrt{m}$
   \STATE {\bfseries Output:} $\hat{f}\in[0,1]^n$
\end{algorithmic}
\end{algorithm}

The bound $\sigma\sqrt{m}$ in Algorithm \ref{nonprivatealgo} is chosen to ensure $f_0$ is a feasible point with high probability.

\begin{lem}\cite{Hsu:2012}\label{l2noise}
Let $\nu\sim N(0, \sigma^2I_m)$ then for all $t>0$, $$\mathbb{P}[\|\nu\|_2^2>\sigma^2(m+2\sqrt{mt}+2t)]\le e^{-t}.$$ So for large $m$ and small $\rho$, we have $\|\nu\|_2\le (1+\rho)\sigma\sqrt{m}$ with high probability. 
\end{lem}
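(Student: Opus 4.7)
The plan is to reduce the statement to a standard chi-squared concentration inequality. Since the coordinates of $\nu$ are independent $N(0, \sigma^2)$, the rescaled squared norm $\|\nu\|_2^2/\sigma^2 = \sum_{i=1}^m (\nu_i/\sigma)^2$ is a $\chi^2_m$ random variable, so the claim reduces to showing $\mathbb{P}[\chi^2_m \ge m + 2\sqrt{mt} + 2t] \le e^{-t}$, which is the Laurent--Massart bound.

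I would establish this via the Cram\'er--Chernoff method. The moment generating function of $\chi^2_m$ is $\mathbb{E}[e^{\lambda \chi^2_m}] = (1-2\lambda)^{-m/2}$ for $\lambda \in (0, 1/2)$, so Markov's inequality applied to $e^{\lambda \chi^2_m}$ yields $\mathbb{P}[\chi^2_m \ge a] \le (1-2\lambda)^{-m/2} e^{-\lambda a}$. Taking logarithms, it suffices to bound the log-MGF of the centered variable: one shows $-\tfrac{m}{2}\log(1-2\lambda) - m\lambda \le \tfrac{m\lambda^2}{1-2\lambda}$ for $\lambda \in (0,1/2)$, which follows from term-by-term comparison of Taylor series (each coefficient of $\lambda^k$ on the left is dominated by the corresponding coefficient on the right). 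Then the Chernoff bound becomes $\mathbb{P}[\chi^2_m - m \ge u] \le \exp(-\lambda u + \tfrac{m\lambda^2}{1-2\lambda})$.

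The remaining step is the optimization. Substituting $u = 2\sqrt{mt} + 2t$ and choosing $\lambda = \tfrac{\sqrt{t/m}}{1+2\sqrt{t/m}}$ (so that $\tfrac{m\lambda}{1-2\lambda} = \sqrt{mt}$) causes the exponent to collapse exactly to $-t$, delivering the stated bound. The main (and only real) obstacle is verifying this optimization cleanly; an alternative that avoids fiddly algebra is to observe that $\chi^2_m - m$ is a sum of $m$ i.i.d.\ centered sub-exponential variables with sub-exponential parameters $(2\sqrt{m}, 4)$, and invoke a Bernstein-type tail inequality directly.

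For the informal consequence that $\|\nu\|_2 \le (1+\rho)\sigma\sqrt{m}$ with high probability for small $\rho$ and large $m$, I would pick $t = \rho^2 m / 4$; then $m + 2\sqrt{mt} + 2t = m(1 + \rho + \rho^2/2) \le m(1+\rho)^2$, so the event in the lemma implies $\|\nu\|_2 \le (1+\rho)\sigma\sqrt{m}$, and the failure probability is at most $e^{-\rho^2 m/4}$, which decays to zero as $m \to \infty$.
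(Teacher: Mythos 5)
Your proof is correct, but note that the paper itself offers no proof of this lemma to compare against: it is imported wholesale via the citation to Hsu--Kakade--Zhang, and the stated tail bound is the classical Laurent--Massart chi-squared inequality. What you have written is essentially the standard Laurent--Massart derivation, and every step checks out: the reduction to $\chi^2_m$ is immediate; the log-MGF comparison $-\tfrac{m}{2}\log(1-2\lambda)-m\lambda\le \tfrac{m\lambda^2}{1-2\lambda}$ holds term-by-term since the coefficient of $\lambda^k$ on the left is $\tfrac{m\,2^{k-1}}{k}$ versus $m\,2^{k-2}$ on the right, and $\tfrac{2}{k}\le 1$ for $k\ge 2$; and with your choice of $\lambda$ one gets $1-2\lambda = \bigl(1+2\sqrt{t/m}\bigr)^{-1}$, hence $\tfrac{m\lambda}{1-2\lambda}=\sqrt{mt}$, so the exponent is $-\lambda(2\sqrt{mt}+2t)+\lambda\sqrt{mt} = -\tfrac{\sqrt{t}}{\sqrt{m}+2\sqrt{t}}\cdot\sqrt{t}\,(\sqrt{m}+2\sqrt{t})=-t$ exactly. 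The high-probability consequence with $t=\rho^2 m/4$ is also right, since $1+\rho+\rho^2/2\le(1+\rho)^2$. Two small remarks. First, your proposed shortcut via a generic Bernstein-type bound for sub-exponential variables with parameters $(2\sqrt{m},4)$ would \emph{not} recover the stated inequality: such bounds degrade the constants and give a tail of the form $\exp\bigl(-\min\{u^2/(8m),\,u/8\}\bigr)$ rather than the sharp threshold $m+2\sqrt{mt}+2t$ with probability $e^{-t}$, so the ``fiddly algebra'' is actually essential here and you should keep the exact MGF optimization as the main argument. Second, the cited source proves a more general statement (tails of quadratic forms $\|Ax\|_2^2$ for subgaussian $x$); in the isotropic Gaussian case it specializes to precisely the Chernoff computation you carried out, so your proof is the natural self-contained replacement for the citation.
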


%We design a locally differentially private recovery algorithm for the 1-dimensional, discrete heat source location problem using the Gaussian mechanism and $\ell_1$ constrained minimization, shown in Algorithm~\ref{nonprivatealgo}. Under some reasonable assumptions on the distance between sources and amongst sensors, we can derive an upper bound on the EMD error of the private recovery algorithm. Table \ref{privatetable} summarizes the asymptotics of our upper bound, where $f_0$ is the true source vector, $\hat{f}$ is our recovered estimate, $n$ is the size of the discrete universe, $m$ is the number of sensors, $t$ is the time lapse before the measurement is taken, $\mu$ is the diffusion constant and $\alpha$ is a measure of separation between the sources. 

\subsection{Basis Pursuit Denoising for RIP matrices}

In order to present the results in this section cleaner, rather than keeping track of $\sigma\sqrt{m}$ we introduce parameters $\alpha, \beta>0$. Basis Pursuit Denoising 
\begin{equation}\label{l1min}
\arg\min_{f\in[0,1]^n}\|f\|_1 \;\; \text{ s.t. } \|Mf-\tilde{y}\|_2\le \alpha
\end{equation}
is the convex relaxation of the problem we would like to solve, $\ell_0$-minimisation: 
\begin{equation}\label{l0min}
\arg\min_{f\in[0,1]^n}\|f\|_0 \;\; \text{ s.t. } \|Mf-\tilde{y}\|_2\le \beta.
\end{equation}
The minimum of the $\ell_0$ norm is the \emph{sparsest} solution. Unfortunately, this version of the problem is NP hard, so in order to find an efficient algorithm we relax to the $\ell_1$ norm. The $\ell_1$ norm is the ``smallest" convex function that places a unit penalty on unit coefficients and zero penalty on zero coefficients. Since the relaxation is convex, we can use convex optimisation techniques to solve it. In the next section we'll discuss an appropriate optimisation algorithm. In this section, we focus on when the solution to the relaxed version \eqref{l1min} is similar to the solution for Equation~\eqref{l0min}.

We call the columns of $M$, denoted by $M_i$, atoms. We will assume for this section that $\|M_i\|_2=1$ for all $i$. Notice that the vector $Mf_0$ is the linear combination of the $M_i$ with coefficients given by the entries of $f_0$ so we can think of recovering the vector $f_0$ as recovering the coefficients\footnote{This is where BPD gets its name. We are pursuing the basis vectors that make up $Mf_0$.}. A key parameter of the matrix $M$ is its \emph{coherence}:
\[\mu = \max_{i,j}|\langle M_i, M_j\rangle|\]
Similar to $\triangle_2(M)$, the coherence is a measure how similar the atoms of $M$. The larger the coherence is, the more similar the atoms are, which makes them difficult to distinguish. For accurate sparse recovery, it is preferential for the coherence to be small. The following theorem relates the solutions to Equation \eqref{l1min} and \eqref{l0min}.

\begin{thm}\citep[Theorem C]{Tropp:2004} \label{sparserecovery}
Suppose $k\le \frac{1}{3}\mu^{-1}$. Suppose $f_{opt}$ is a $k$-sparse solution to Equation \eqref{l0min} with $\beta=\frac{\alpha}{\sqrt{1+6k}}$. Then the solution $\hat{f}$ produced from Algorithm \ref{nonprivatealgo} satisfies 
\begin{itemize}
\item supp$(\hat{f})\subset$supp$(f_{opt})$
\item $\hat{f}$ is no sparser than a solution to Equation \eqref{l0min} with $\beta = \alpha$
\item $\|\hat{f}-f_{opt}\|_2\le \alpha\sqrt{3/2}$
\end{itemize}
\end{thm}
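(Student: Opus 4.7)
The plan is to mimic the standard ``incoherence implies sparse recovery'' route: first localize the BPD solution $\hat f$ to the support $T := \text{supp}(f_{opt})$ of size at most $k$, then bound the error on that support using a well-conditioned submatrix. The workhorse lemma is that, since $\|M_i\|_2 = 1$ and $|\langle M_i, M_j\rangle| \leq \mu$ for $i \neq j$, Gershgorin's disk theorem applied to $M_T^{*}M_T$ forces its eigenvalues to lie in $[1-(k-1)\mu, 1+(k-1)\mu]$. Under the hypothesis $k\leq \frac{1}{3\mu}$, this interval is contained in $[2/3, 4/3]$, which will be used in every subsequent step.

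Next I would establish support containment (first bullet). Consider the ``restricted'' BPD obtained by forcing $\text{supp}(f) \subseteq T$, and let $\hat f_T$ be its minimizer. The KKT conditions at $\hat f_T$ produce a dual vector $w$ with $M_T^{*}w \in \partial \|\hat f_T\|_1$. To extend optimality from the restricted problem to the full one, I need $|\langle M_j, w\rangle| < 1$ for every $j \notin T$; by standard duality this reduces to Tropp's exact recovery condition $\max_{j\notin T}\|M_T^{+}M_j\|_1 < 1$. Writing $M_T^{+}M_j = (M_T^{*}M_T)^{-1}M_T^{*}M_j$, I would bound $\|M_T^{*}M_j\|_\infty \leq \mu$ entrywise from coherence, and $\|(M_T^{*}M_T)^{-1}\|_{\infty\to\infty} \leq \frac{1}{1-(k-1)\mu}$ via a Neumann series in $I - M_T^{*}M_T$. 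Combining gives $\|M_T^{+}M_j\|_1 \leq \frac{k\mu}{1-(k-1)\mu} \leq \frac{1/3}{2/3} = \tfrac12 < 1$, so $\hat f = \hat f_T$ and the support containment follows.

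The second bullet is then immediate: $\hat f$ is supported in $T$, so $\|\hat f\|_0 \leq k$, and $\hat f$ is feasible for the $\ell_0$ problem at tolerance $\beta = \alpha$ since BPD feasibility already gives $\|M\hat f - \tilde y\|_2 \leq \alpha$. Any strictly sparser feasible point would contradict minimality somewhere (either of $\|\cdot\|_0$ or, using support containment, of $\|\cdot\|_1$ along the $T$-restricted problem).

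For the third bullet, both $\hat f$ and $f_{opt}$ are supported on $T$, so the difference lies in the column span of $M_T$. The lower eigenvalue bound yields
\[
\|\hat f - f_{opt}\|_2 \;\leq\; \tfrac{1}{\sqrt{1-(k-1)\mu}} \, \|M(\hat f - f_{opt})\|_2 \;\leq\; \sqrt{3/2}\,\|M(\hat f-f_{opt})\|_2.
\]
The task then becomes bounding $\|M(\hat f - f_{opt})\|_2$ by $\alpha$; the feasibilities $\|M\hat f - \tilde y\|_2\leq \alpha$ and $\|Mf_{opt}-\tilde y\|_2\leq \beta$ only give $\alpha+\beta$ by the triangle inequality, which is why $\beta$ is chosen precisely as $\alpha/\sqrt{1+6k}$: this is the delicate step. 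Here I would expand $\|M\hat f - \tilde y\|_2^2$ as $\|Mf_{opt} - \tilde y + M(\hat f - f_{opt})\|_2^2$, exploit that $\hat f$ solves BPD (so $\|\hat f\|_1 \leq \|f_{opt}\|_1$, enforcing alignment between $M(\hat f - f_{opt})$ and the active subgradient), and use the spectral bound on $M_T$ once more to absorb the cross term, squeezing the bound down to $\alpha\sqrt{3/2}$. This constant-tracking step, not the support containment, is the main obstacle I anticipate; the cleanest route is probably to follow Tropp's convex-duality argument rather than crude triangle inequalities.
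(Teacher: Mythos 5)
You should know at the outset that the paper contains no proof of this statement: it is quoted verbatim from \citet[Theorem C]{Tropp:2004}, so the only benchmark is Tropp's own argument, and your sketch does follow its general architecture (restrict to $\Lambda=\mathrm{supp}(f_{opt})$, certify global optimality of the restricted minimizer by duality, control $M_\Lambda^*M_\Lambda$ spectrally). The preliminary steps are sound: Gershgorin places the eigenvalues of $M_\Lambda^*M_\Lambda$ in $[2/3,4/3]$ under $k\le\frac{1}{3}\mu^{-1}$, and the Neumann-series estimate $\max_{j\notin\Lambda}\|M_\Lambda^+M_j\|_1\le k\mu/(1-(k-1)\mu)\le 1/2$ is the standard ERC bound (your combination of an entrywise $\ell_\infty$ bound with an $\infty\to\infty$ operator norm is notationally off, but since $(M_\Lambda^*M_\Lambda)^{-1}$ is symmetric its $1\to 1$ and $\infty\to\infty$ norms coincide, so the constant survives). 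The second bullet is also correct and is a one-liner: $\hat f$ is feasible for the $\ell_0$ problem at tolerance $\alpha$, hence no sparser than its minimizer.

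The genuine gap is in the first and third bullets, and it is the same gap in both places: the hypothesis $\beta=\alpha/\sqrt{1+6k}$ never enters your argument where it is actually needed, which is a structural red flag. ERC $<1$ alone does \emph{not} yield $\mathrm{supp}(\hat f)\subseteq\Lambda$ for the noise-constrained problem. The dual certificate at the restricted minimizer $\hat f_\Lambda$ is the residual $r=\tilde y-M\hat f_\Lambda$, satisfying $M_\Lambda^*r=\lambda u$ with $u\in\partial\|\hat f_\Lambda\|_1$, and for $j\notin\Lambda$ one must verify $|\langle M_j,r\rangle|\le\lambda$. Splitting $r$ into components in and orthogonal to $\mathrm{range}(M_\Lambda)$, the in-range part is controlled by the ERC, $|\langle M_j,P_\Lambda r\rangle|\le\lambda\|M_\Lambda^+M_j\|_1\le\lambda/2$, but the orthogonal part is only bounded via $\|P_\Lambda^\perp r\|_2=\|P_\Lambda^\perp(\tilde y-Mf_{opt})\|_2\le\beta$, so one needs $\lambda\ge 2\beta$; lower-bounding $\lambda$ through $\lambda\|u\|_2=\|M_\Lambda^*P_\Lambda r\|_2\ge\sqrt{2/3}\,\|P_\Lambda r\|_2$, $\|P_\Lambda r\|_2^2\ge\alpha^2-\beta^2$ and $\|u\|_2\le\sqrt{k}$ turns $\lambda\ge 2\beta$ into exactly $\alpha^2\ge(1+6k)\beta^2$ --- this is precisely where the odd-looking tolerance is spent, and without it the containment claim can simply fail (take noise aligned with an atom outside $\Lambda$ and $\beta$ close to $\alpha$). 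Likewise, for the third bullet you candidly note that the triangle inequality only gives $\sqrt{3/2}\,(\alpha+\beta)$ and defer the improvement to $\sqrt{3/2}\,\alpha$ to an unspecified plan to ``absorb the cross term''; that alignment computation, which again exploits the KKT structure of the residual, is the actual content of Tropp's proof. As written, your proposal erects the correct scaffolding but leaves both quantitatively decisive steps unproven.
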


Theorem \ref{sparserecovery} says that if the matrix $M$ is coherent then the solution to the convex relaxation (Algorithm \ref{nonprivatealgo}) is at least as sparse as a solution to \eqref{l0min} with error tolerance somewhat smaller than $\alpha$. Also, $\hat{f}$ only recovers source locations that also appear in $f_{opt}$, although it may not recover all of the source locations that appear in $f_{opt}$. The final property bounds the weight assigned to any source identified in $f_{opt}$ and not $\hat{f}$. If $\tilde{y}~=~Mf_0+Z$ then the worst case discrepancy between $f_0$ and $f_{opt}$ occurs when $Z$ concentrates its weight on a single atom. In our case, the noise vector $Z$ has i.i.d. Gaussian coordinates and hence is unlikely to concentrate its weight. 

%Since the error tolerance is smaller than $\sigma\sqrt{m}$, $f_{opt}$ may not be equal to $f_0$. In general there is no guarantee that $f_0=f_{opt}$ even if $f_0$ is a feasible point. If $\tilde{y}=Mf_0+Z$ then the worst case discrepancy between $f_0$ and $f_{opt}$ occurs when $Z$ concentrates it's weight on a single atom. In our case, the noise vector $Z$ has i.i.d. Gaussian coordinates and hence is unlikely to concentrate it's weight. 

The key property for \emph{exact} recovery of $f_0$, rather than $f_{opt}$, is that $M$ is a near isometry on sparse vectors. A matrix $M$ satisfies the \emph{Restricted Isometry Property (RIP)} of order $k$ with restricted isometry constant $\delta_k$ if $\delta_k$ is the smallest constant such that for all $k$-sparse vectors $x$, \[(1-\delta_k)\|x\|_2^2\le \|Mx\|_2^2\le (1+\delta_k)\|x\|_2^2.\]

If $f_0$ is a feasible point and $\delta_k$ is small, then we can guarantee that $f_0$ and $\hat{f}$ are close in the $\ell_2$ norm. 

\begin{thm}\citep[Theorem 1.2]{Candas:2008}
\label{RIP}
Assume $\delta_{2k}<\sqrt{2}-1$ and $\|Z\|_2\le \alpha$ and $f_0$ is $k$-sparse. Then the solution $\hat{f}$ to \eqref{l1min} satisfies \[\|\hat{f}-f_0\|_2\le C\alpha\] for some constant $C$. 
\end{thm}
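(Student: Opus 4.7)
The plan is to follow the now-standard Candes/Tao-style RIP argument based on decomposing the error vector $h = \hat{f} - f_0$ into blocks indexed by the decreasing rearrangement of $|h|$ outside the support of $f_0$. The two workhorse facts we need to extract are a \emph{tube} constraint (both $\hat f$ and $f_0$ are feasible, so $\|Mh\|_2 \le 2\alpha$) and a \emph{cone} constraint (optimality of $\hat f$ together with $k$-sparsity of $f_0$ gives $\|h_{T_0^c}\|_1 \le \|h_{T_0}\|_1$, via the standard splitting $\|f_0\|_1 \ge \|\hat f\|_1 = \|f_0 + h_{T_0}\|_1 + \|h_{T_0^c}\|_1 \ge \|f_0\|_1 - \|h_{T_0}\|_1 + \|h_{T_0^c}\|_1$, where $T_0 = \mathrm{supp}(f_0)$).

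Next I would partition $T_0^c$ into blocks $T_1, T_2, \ldots$ of size $k$ in order of decreasing $|h|$. The crucial pointwise bound is that each entry of $h_{T_j}$ is dominated by the average $|h|$ on $T_{j-1}$, so $\|h_{T_j}\|_2 \le \|h_{T_{j-1}}\|_1 / \sqrt{k}$. Summing over $j \ge 2$ and applying the cone constraint gives
\begin{equation*}
\sum_{j \ge 2} \|h_{T_j}\|_2 \;\le\; \|h_{T_0^c}\|_1 / \sqrt{k} \;\le\; \|h_{T_0}\|_1 / \sqrt{k} \;\le\; \|h_{T_0}\|_2.
\end{equation*}
Hence the tail outside $T_{01} := T_0 \cup T_1$ is controlled by $\|h_{T_0}\|_2 \le \|h_{T_{01}}\|_2$, and it suffices to bound $\|h_{T_{01}}\|_2$.

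The main technical step, and where the hypothesis $\delta_{2k} < \sqrt{2} - 1$ gets used, is to combine RIP with the near-orthogonality lemma: if $u,v$ have disjoint supports of total size at most $2k$, then $|\langle Mu, Mv\rangle| \le \delta_{2k}\|u\|_2\|v\|_2$. Applying RIP on $T_{01}$ (which has size $\le 2k$) gives $(1-\delta_{2k})\|h_{T_{01}}\|_2^2 \le \|Mh_{T_{01}}\|_2^2$, and writing $Mh_{T_{01}} = Mh - \sum_{j\ge 2} Mh_{T_j}$ together with the tube bound $\|Mh\|_2 \le 2\alpha$ and the near-orthogonality estimates on each $\langle Mh_{T_{01}}, Mh_{T_j}\rangle$ yields
\begin{equation*}
(1-\delta_{2k})\|h_{T_{01}}\|_2^2 \;\le\; 2\alpha\|h_{T_{01}}\|_2 + \delta_{2k}\|h_{T_{01}}\|_2 \sum_{j\ge 2}\|h_{T_j}\|_2.
\end{equation*}
Dividing by $\|h_{T_{01}}\|_2$ and inserting the tail bound $\sum_{j\ge 2}\|h_{T_j}\|_2 \le \|h_{T_{01}}\|_2$ gives $(1 - \delta_{2k} - \sqrt{2}\delta_{2k})\|h_{T_{01}}\|_2 \lesssim \alpha$ (after a careful use of $\|h_{T_0}\|_2 \le \|h_{T_{01}}\|_2$ and, more sharply, $\|h_{T_0}\|_1 / \sqrt{k} \le \|h_{T_0}\|_2$ inside a Cauchy-Schwarz step that produces the $\sqrt{2}$). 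This is exactly where $\delta_{2k} < \sqrt{2} - 1$ makes the coefficient positive, so we can solve for $\|h_{T_{01}}\|_2 \le C'\alpha$.

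Assembling, $\|h\|_2 \le \|h_{T_{01}}\|_2 + \sum_{j\ge 2}\|h_{T_j}\|_2 \le 2\|h_{T_{01}}\|_2 \le C\alpha$ with $C$ depending only on $\delta_{2k}$. The main obstacle is bookkeeping in the RIP-plus-near-orthogonality step: the constants must be tracked sharply enough that the threshold $\sqrt{2}-1$ emerges, and one must be careful that the $\sqrt{2}$ comes from the $\ell_1$-to-$\ell_2$ conversion of $\|h_{T_0}\|_1$ (splitting the $T_0$-support contribution inside the inner product bound), not from a loose application of the triangle inequality. Everything else is algebraic manipulation of the tube and cone constraints.
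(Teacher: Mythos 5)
Your proposal is correct and is essentially the canonical argument: the paper itself gives no proof of this statement, citing it directly as Theorem 1.2 of \citet{Candas:2008}, and your reconstruction (tube constraint $\|Mh\|_2\le 2\alpha$, cone constraint from optimality, size-$k$ block decomposition of $T_0^c$, and the RIP near-orthogonality bound on $\langle Mh_{T_{01}}, Mh_{T_j}\rangle$) is exactly Cand\`es's proof. One small correction to your commentary: the $\sqrt{2}$ in the threshold comes from the two-block Cauchy--Schwarz step $\|h_{T_0}\|_2+\|h_{T_1}\|_2\le\sqrt{2}\,\|h_{T_{01}}\|_2$ inside the near-orthogonality estimate, not from the $\ell_1$-to-$\ell_2$ conversion $\|h_{T_0}\|_1/\sqrt{k}\le\|h_{T_0}\|_2$, which carries constant $1$.
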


The exact constant $C$ is given explicitly in \cite{Candas:2008} and is rather small. For example, when $\delta_{2k}=0.2$, we have $C\le 8.5$.

Theorems \ref{sparserecovery} and \ref{RIP} only provide meaningful results for matrices with small $\mu$ and $\delta_k$. Unfortunately, the coherence and restricted isometry constants for ill-conditioned matrices, and in particular diffusion matrices, are both large. It is somewhat surprising then that BPD recovers well in the examples we will explore in the following sections.

\section{Diffusion on the Unit Interval}\label{heatonline}

\begin{figure}
	\centering
	\includegraphics[width=0.6\linewidth]{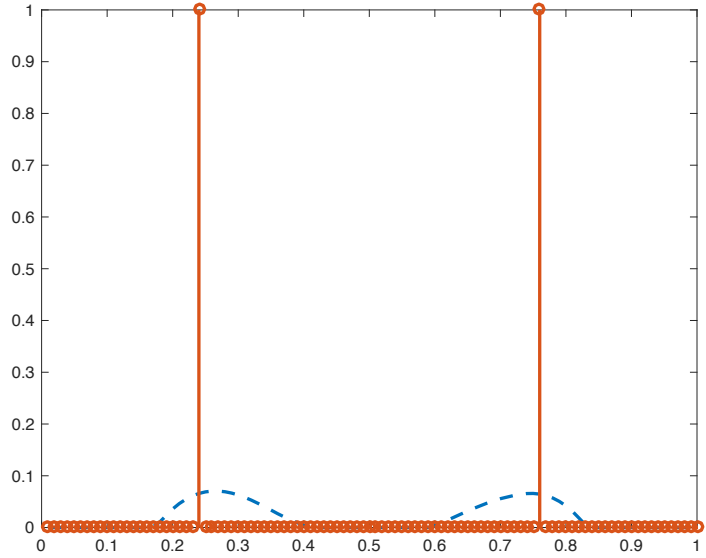}
\caption{$f_0$ has unit peaks 0.76 and 0.24. The parameters are $n=100$, $m=50$, $T=0.05$ and $\sigma = 0.1$. The red line is $f_0$, the blue (dashed) line is $\hat{f}$.}
\label{linediffuse}
\end{figure}

Let us define the linear physical transformation explicitly for heat source localization. To distinguish this special case from the general, we denote the measurement matrix by $A$ (instead of $M$). For heat diffusion, we have a diffusion constant $\mu$ and a time $t$ at which we take our measurements. Let $T = \mu t$ in what follows. Let $g(x,t) = \frac{1}{\sqrt{4\pi T}}e^{\frac{-|x|^2}{4T}}$. Let $n>0$ and suppose the support of $f$ is contained in the discrete set $\{\frac{1}{n}, \cdots, 1\}$. Let $m>0$ and suppose we take $m$ measurements at locations $\frac{1}{m}, \cdots, 1$ so $y_i$ is the measurement of the sensor at location $\frac{i}{m}$ at time $t$ and we have 
\[
	y = Af_0 \;\; \text{ where }\;\; A_{ij} = g\left(\frac{i}{n}-\frac{j}{m}, t\right).
\] 
The heat kernel, $A$, is severely ill-posed due to the fact that as heat dissipates, the measurement vectors for different source vectors become increasingly close \cite{Weber:1981}. Figure \ref{linediffuse} shows the typical behaviour of Algorithm \ref{nonprivatealgo} with the matrix $A$. As can be seen in the figure, this algorithm returns an estimate $\hat{f}$ that is indeed close to $f_0$ in the EMD but not close in more traditional norms like the $\ell_1$ and $\ell_2$ norms. This phenomenon was noticed by \cite{Li:2014}, who proved that if $f_0$ consists of a single source then EMD$(f_0, \hat{f})$ is small where $\hat{f}=R(\tilde{y})$.  

\begin{table}
\caption{Asymptotic upper bounds for private recovery assuming $\sqrt{T}(\frac{\sqrt{m}}{nT^{1.5}}+ke^{\frac{-\alpha^2}{4T}})\le c<1$.}
\label{privatetable}
\vskip 0.15in
\begin{center}
\begin{small}
\begin{sc}
\begin{tabular}{lcccr}
\hline
%\abovespace\belowspace
Variable & \text{EMD}$\left(\frac{f_0}{\|f_0\|_1}, \frac{\hat{f}}{\|\hat{f}\|_1}\right)$ \\
\hline
\hline
%\abovespace
$n$&$O(1+\frac{1}{\sqrt{n}})$\\
\hline
$m$&$O(1)$\\
\hline
%\belowspace
$T$&$\min\{1, \;O(1+\frac{1}{T}+T^{2.5}e^{-\alpha^2/4T})\}$\\
\hline
\end{tabular}
\end{sc}
\end{small}
\end{center}
\vskip -0.1in
\end{table}

%Let $A_i$ be the $i$th column of $A$. 

\begin{prop}\label{propstddev}
With the definition of neighbours presented in Definition \ref{neighbours} and restricting to $f_0\in[0,1]^n$ we have
 $$\triangle_2 (A) = O\left(\frac{\alpha\sqrt{m}}{T^{1.5}}\right)$$
\end{prop}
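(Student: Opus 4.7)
My plan is to reduce the sensitivity to a column-difference estimate via the formula $\triangle_2(A) \le \alpha \max_{e_i, e_j\ \text{neighbours}} \|A_i - A_j\|_2$ established earlier in Section 2 (the triangle-inequality-along-optimal-flow argument from Proposition~\ref{propstddev}). In the 1D discretisation $\{1/n,\ldots,1\}$, two neighbouring source locations differ in position by $1/n$, so the task becomes to control $\max_j \|A_j - A_{j+1}\|_2$, where the $i^{\text{th}}$ entry of $A_j - A_{j+1}$ is the Gaussian difference $g(i/m - j/n, t) - g(i/m - (j{+}1)/n, t)$.

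To estimate this $\ell_2$ norm, I would apply the mean value theorem entry-wise, bounding each coordinate by $\tfrac{1}{n}\,|g'(\xi_i, t)|$ where $g'(x, t) = -\tfrac{x}{2T}\, g(x, t)$. Squaring and summing over the $m$ sensors produces a Riemann approximation of $\tfrac{m}{n^2} \int_{\mathbb{R}} (g'(x, t))^2\, dx$, and the latter integral evaluates in closed form via the Gaussian second moment $\int x^2 e^{-x^2/(2T)}\, dx = T\sqrt{2\pi T}$, contributing the $T^{-3/2}$ factor that drives the stated $T$-dependence. Taking the square root and multiplying by $\alpha$ then produces the asserted asymptotic. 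As an alternative (or sanity check) route, one can evaluate the full quadratic integral $\int (g(x-a,t)-g(x-b,t))^2\,dx = \tfrac{1}{\sqrt{2\pi T}}(1 - e^{-(a-b)^2/(8T)})$ directly using the Gaussian convolution identity $\int g(x-a,t)g(x-b,t)\,dx = \tfrac{1}{\sqrt{8\pi T}} e^{-(a-b)^2/(8T)}$, which gives the same asymptotic after Taylor-expanding in the small parameter $(a-b)^2/T$.

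The main obstacle will be obtaining the sharp $T$-exponent: a naive $\ell_\infty$ estimate using $\sup_\xi |g'(\xi, t)| = \Theta(1/T)$, attained at $\xi = \pm\sqrt{2T}$ by a direct critical-point calculation, combined with $\|v\|_2 \le \sqrt{m}\,\|v\|_\infty$, is too lossy; the argument must pass through the integral evaluation above in order to extract the extra $\sqrt{T}$ factor. Extending the finite Riemann sum to all of $\mathbb{R}$ is benign because both $g$ and $g'$ decay super-exponentially for $|x| \gg \sqrt{T}$, so the boundary terms are negligible. Finally, the hypothesis $f_0 \in [0,1]^n$ does not alter the analysis, since the extremal pair realising the sensitivity is achieved by two neighbouring standard basis vectors, which automatically satisfy the box constraint.
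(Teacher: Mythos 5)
Your reduction skeleton is the same as the paper's (triangle inequality along the optimal flow, then a bound on adjacent-column differences), but the estimation step as you have planned it does not deliver the stated bound, for two concrete reasons. First, an exponent slip: your Riemann/integral route gives $\|A_j - A_{j+1}\|_2^2 \lesssim \frac{m}{n^2}\int_{\mathbb{R}}(g'(x,t))^2\,dx = \Theta\bigl(\frac{m}{n^2}T^{-3/2}\bigr)$, and the square root halves the exponent, so you obtain $\|A_j - A_{j+1}\|_2 = O\bigl(\frac{\sqrt{m}}{n}T^{-3/4}\bigr)$, not the $T^{-3/2}$ you say the integral ``drives.'' (Your exact identity $\frac{1}{\sqrt{2\pi T}}\bigl(1-e^{-(a-b)^2/8T}\bigr)$ confirms this: with $a-b=1/n$ it linearises to $\Theta\bigl(\frac{1}{n^2T^{3/2}}\bigr)$ for the \emph{squared} norm.) For $T\lesssim 1$ this is in fact sharper than the proposition, so the claim still follows there — and your dismissal of the $\ell_\infty$ route is backwards in that regime, since $\sqrt{m}\cdot\frac{1}{n}\sup|g'| = \Theta\bigl(\frac{\sqrt{m}}{nT}\bigr)$ already implies the stated bound when $T\le 1$. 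Second, the ``benign extension to $\mathbb{R}$'' is precisely where your argument fails for $T\gtrsim 1$: there $T^{-3/4}$ is \emph{weaker} than the claimed $T^{-3/2}$, and the claimed rate at large $T$ comes from boundedness of the domain. All offsets satisfy $|i/n - j/m|\le 1+1/n$, which lies well inside the flat part of the Gaussian once $\sqrt{2T}\gg 1$, so on the sensor range $|g'(\xi,t)| = \frac{|\xi|}{2T}g(\xi,t) = O(T^{-3/2})$ rather than the global $\Theta(T^{-1})$ attained at $\xi=\pm\sqrt{2T}$. The paper's proof is effectively this domain-restricted mean-value bound: it factors out $e^{-(i/n-j/m)^2/2T}$, bounds the bracket $\bigl(1-e^{\pm O(1/(nT))}\bigr)^2 = O(1/(n^2T^2))$ uniformly using $|i/n-j/m|\le 1$, and bounds the Gaussian sum crudely by $m$; this holds for all $T$ (and is what the paper's Table 1 and the calibration $\sigma = \sqrt{m}/(nT^{1.5})$ rely on as $T\to\infty$), whereas it is loose by $T^{3/4}$ at small $T$, where your integral is sharper.

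Separately, fix the $n$ bookkeeping. Adjacent grid locations are at distance $1/n$ in the metric inducing the EMD, so an EMD budget of $\alpha$ buys $\alpha n$ grid hops of a unit mass; the correct chaining is $\triangle_2(A)\le \alpha\, n\max_j\|A_j-A_{j+1}\|_2$, and the $1/n$ in your column bound cancels. Literally ``multiplying by $\alpha$'' as written would leave a spurious $1/n$ in the final bound and assert an inequality that is false: moving a unit mass by $\alpha$ gives $\|Af_0-Af_0'\|_2 = \Theta(\alpha\sqrt{m}\,T^{-3/4})$ for small $T$, with no $n$. (The paper performs this step silently — its proof ends at $O(\sqrt{m}/(nT^{1.5}))$ per adjacent pair while the statement has no $n$; Lemma \ref{overlap} does the chaining explicitly.) For the same reason your closing remark is wrong: under Definition \ref{neighbours} the extremal pair is not two \emph{adjacent} basis vectors (those are at EMD $1/n\ll\alpha$) but, e.g., $e_i$ and $e_{i+\alpha n}$, reached by chaining $\alpha n$ single-hop estimates.
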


\begin{proof}
For all $i\in[n]$ we have 
\begin{align*}
\|A_i-A_{i+1}\|_2^2 &= \frac{1}{4\pi T}\sum_{j=1}^m \left(e^{\frac{-(\frac{i}{n}-\frac{j}{m})^2}{4T}}-e^{\frac{-(\frac{i+1}{n}-\frac{j}{m})^2}{4T}}\right)^2\\
&= \frac{1}{4\pi T}\sum_{j=1}^m e^{\frac{-(\frac{i}{n}-\frac{j}{m})^2}{2T}}\left(1-e^{\frac{(\frac{i}{n}-\frac{j}{m})^2-(\frac{i+1}{n}-\frac{j}{m})^2}{4T}}\right)^2\\
& \le \frac{1}{4\pi T}\max_{i\in[n]}\max_{j\in[m]}\left(1-e^{\frac{(\frac{i}{n}-\frac{j}{m})^2-(\frac{i+1}{n}-\frac{j}{m})^2}{4T}}\right)^2\sum_{j=1}^m e^{\frac{-(\frac{i}{n}-\frac{j}{m})^2}{2T}}\\
\end{align*}
Now, $\sum_{j=1}^me^{\frac{-(\frac{i}{n}-\frac{j}{m})^2}{2T}}\le m$ and $$\max_{i\in[n]}\max_{j\in[m]}\left(1-e^{\frac{(\frac{i}{n}-\frac{j}{m})^2-(\frac{i+1}{n}-\frac{j}{m})^2}{4T}}\right)^2\le \max\{(1-e^{\frac{-3}{4nT}})^2, (1-e^{\frac{2}{4nT}})^2\}=O\left(\frac{1}{n^2T^2}\right).$$ Therefore, $$\|A_i-A_{i+1}\|_2 = O\left(\frac{\sqrt{m}}{nT^{1.5}}\right).$$
\end{proof}

Figure \ref{stddev} shows calculations of $\triangle_2(A)$ with varying parameters. The vertical axes are scaled to emphasise the asymptotics. These calculations suggest that the analysis in Proposition \ref{propstddev} is asymptotically tight in $m$, $n$ and $T$.

\begin{figure}
\centering
\begin{subfigure}{.33\textwidth}
	\centering
	\includegraphics[width=1\linewidth]{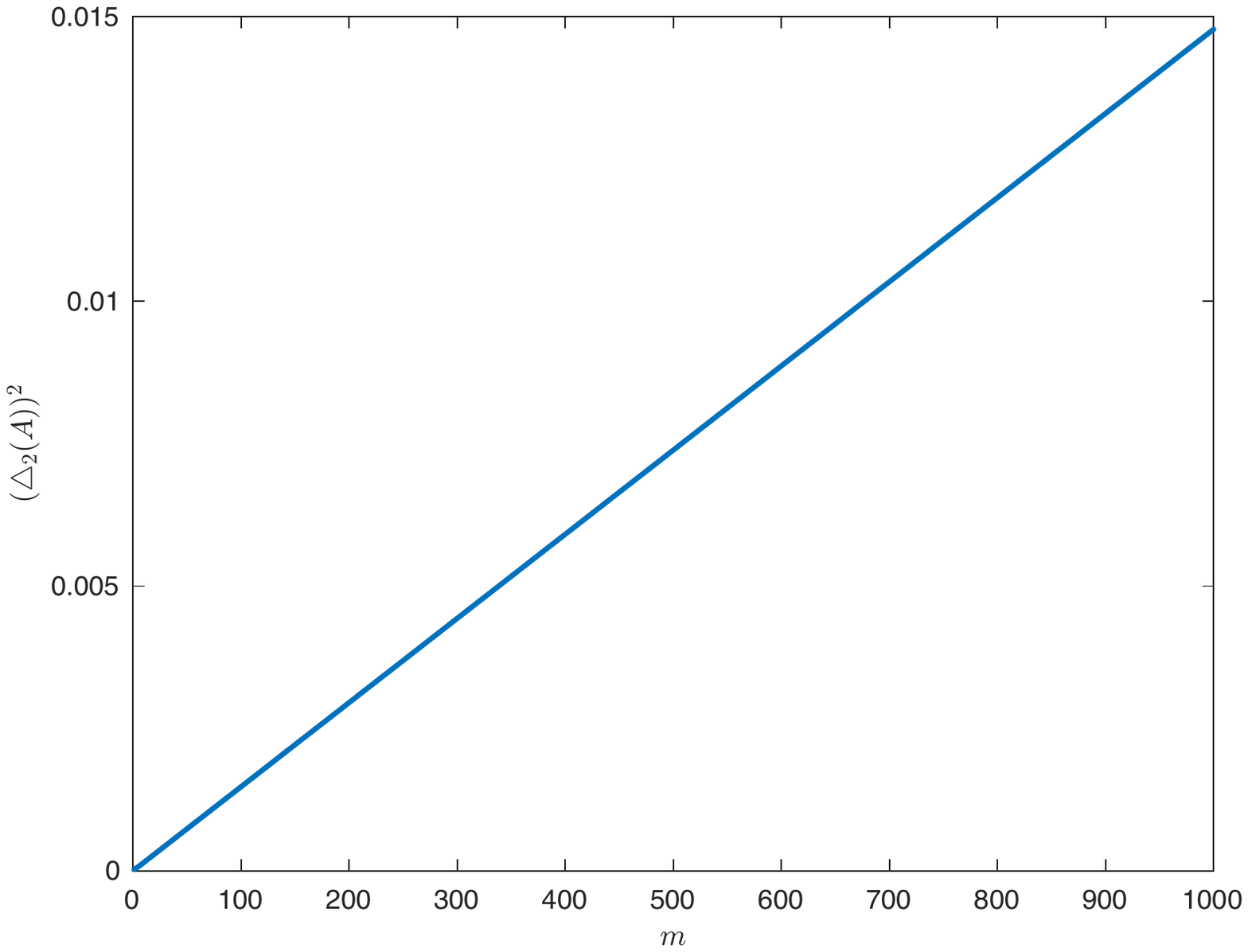}
	\caption{Dependence on $m$}
	\label{mvaries}
\end{subfigure}\hspace{0.01in}%
\begin{subfigure}{.33\textwidth}
	\centering
	\includegraphics[width=1\linewidth]{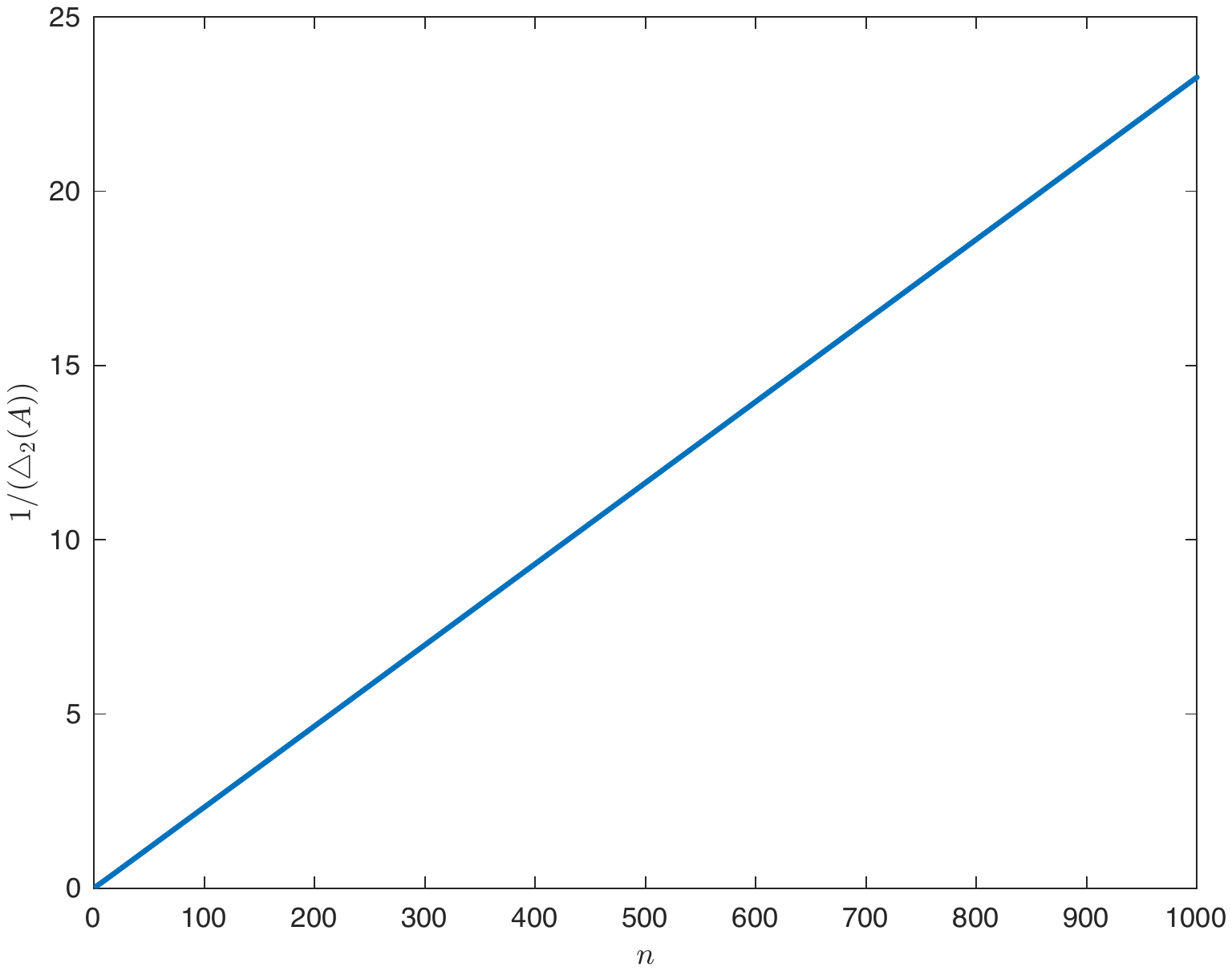}
	\caption{Dependence on $n$}
	\label{nvaries}
\end{subfigure}\hspace{0.01in}%
\begin{subfigure}{.33\textwidth}
	\centering
	\includegraphics[width=1\linewidth]{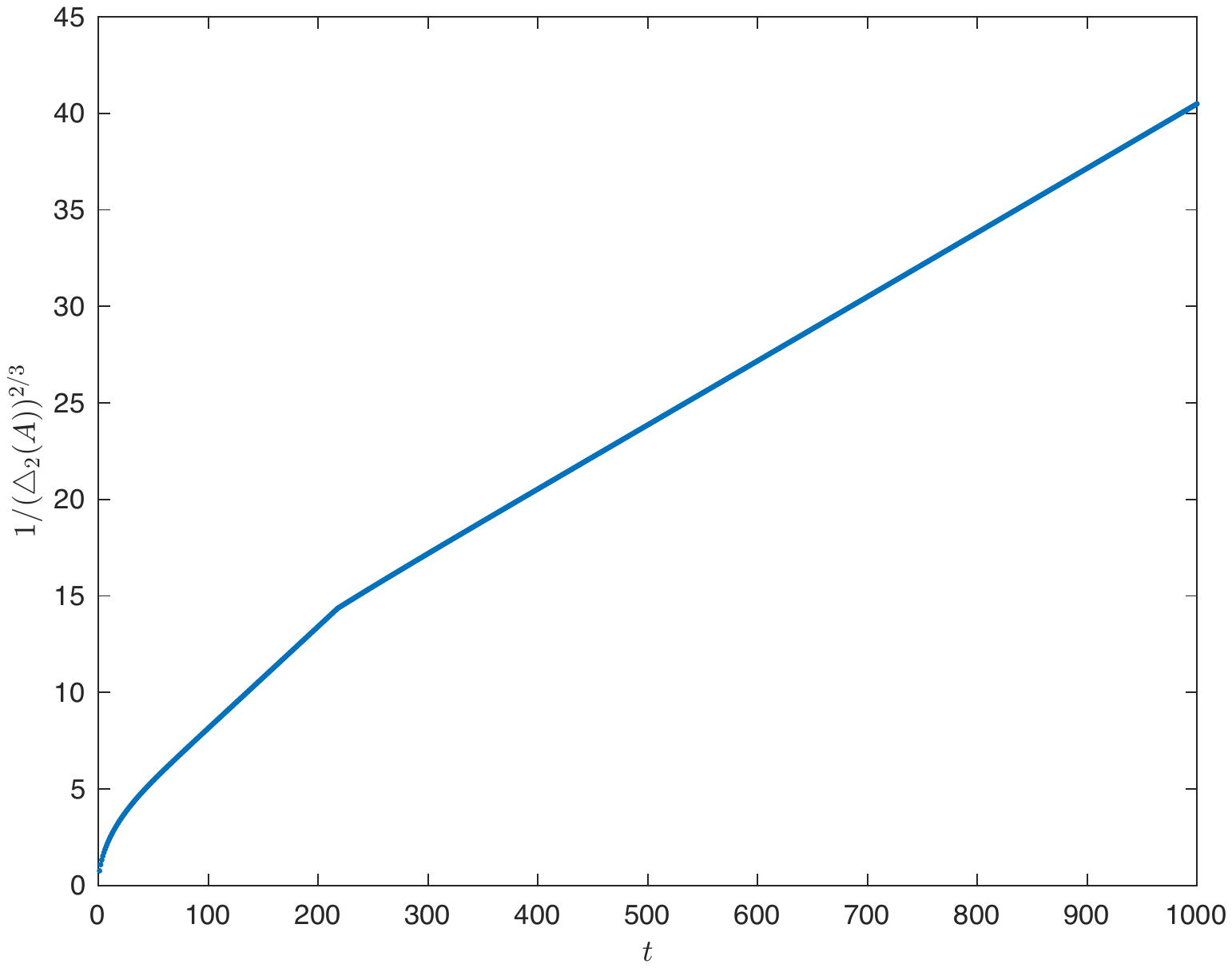}
	\caption{Dependence on $t$}
	\label{tvaries}
\end{subfigure}
\caption{Empirical results of computation of $\triangle_2(A)$. Unless specified otherwise, $m=500$ and $t=0.1$. In \eqref{mvaries}, $n=500$ and in \eqref{tvaries}, $n=1000$.}
\label{stddev}
\end{figure}

\begin{thm} \label{main}
Suppose that $f_0$ is a source vector, $\hat{y}=R(\tilde{y})$ and assume the following:
\begin{enumerate} 
\item\label{enoughsensors} $m\sqrt{T/2}>1$
\item\label{soon} $\sqrt{2T}<1$
\item\label{far} $|x_i-x_j|>\sqrt{2T}+2A$ for some $A>0$
%\item\label{weightgreaterthan0} $\sqrt{T}(2\sigma+\frac{k}{\sqrt{2\pi}}e^{\frac{-A^2}{4T}})\le c<1$
\end{enumerate}
then w.h.p. 
\begin{align*}
\text{EMD}&\Bigg(\frac{f_0}{\|f_0\|_1}, \frac{\hat{f}}{\|\hat{f}\|_1}\Bigg)\le \min\Bigg\{1, \;O\Bigg[\frac{1}{1-\min\{1, C\}}\Bigg(\frac{1}{k}\sqrt{\frac{T^{1.5}C}{\sqrt{T}+1}}+k\min\{1, C\}+\frac{T^2C}{(T+1)k}\Bigg)\Bigg]\Bigg\}
\end{align*}
where $C=\min\left\{k,\; \sqrt{T}\left[\sigma+ke^{-A^2/4T}\right]\right\}$.
\end{thm}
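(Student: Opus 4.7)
The plan is to combine (a) the feasibility of $f_0$ in \eqref{l1min}, which by Lemma~\ref{l2noise} gives $\|\hat{f}\|_1\le\|f_0\|_1$ and $\|A(\hat{f}-f_0)\|_2\le 2\sigma\sqrt{m}$ with high probability, with (b) the Gaussian-tail structure of the columns of $A$. Write $f_0=\sum_{i=1}^k c_i\,e_{x_i}$ and, using assumption~(\ref{far}), fix disjoint bins $B_i\subset[0,1]$ of radius $r=\sqrt{T/2}+A$ centred at each $x_i$; let $B_0=[0,1]\setminus\bigcup_{i\ge 1} B_i$. Decompose $\hat{f}=\hat{f}_0+\sum_{i=1}^k \hat{f}_i$ where $\hat{f}_i$ is the restriction of $\hat{f}$ to $B_i$. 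Denote $m_i=\|\hat{f}_i\|_1$.

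The first main step is to localise mass onto the correct bins, i.e.\ to show that $m_i$ is close to $c_i$ and that $m_0$ is small. The heat kernel $A_x$ for $x\in B_i$ is a discretised Gaussian of width $\sqrt{T}$, so $\langle A_x,A_z\rangle=O(\tfrac{m}{\sqrt{T}} e^{-A^2/4T})$ whenever $z$ lies in a different bin or in $B_0$, by assumption~(\ref{far}). Apply a localised test function at each $x_i$ (a smooth bump, supported in $B_i$, normalised so that it behaves like $A_{x_i}/\|A_{x_i}\|_2^2$) to the residual $\tilde y-A\hat f$; the inner product simultaneously extracts $m_i-c_i$ up to a term from the measurement slack (controlled by $\|A(\hat f-f_0)\|_2\le 2\sigma\sqrt{m}$) and cross-bin contamination (controlled by $k\,e^{-A^2/4T}$ times the relevant total mass). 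Combining with assumption~(\ref{enoughsensors}), which ensures $\|A_{x_i}\|_2^2\asymp m/\sqrt{T}$, this yields $|m_i-c_i|$ and $m_0$ bounded by $O\!\big(\sqrt{T}(\sigma+k e^{-A^2/4T})\big)$, i.e.\ by $O(C/k)$ after distributing the total over the $k$ sources. The trivial bound $m_i\le \|f_0\|_1$ is what allows the $\min\{k,\cdot\}$ to enter $C$.

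The second step assembles the EMD. Within each bin $B_i$ we transport all of $\hat f_i$ to $x_i$ at cost $\le r\cdot m_i$; the residual mass imbalance $|m_i-c_i|$ and the outside mass $m_0$ are then transported at cost at most $1$ per unit. After normalising $\hat f$ and $f_0$ to unit mass (here one uses $\|\hat f\|_1\ge \|f_0\|_1-O(C)$ from Step~1, producing the $\tfrac{1}{1-\min\{1,C\}}$ prefactor), the three contributions give the three summands in the stated bound: the diameter term $r\cdot m_i/\|\hat f\|_1$ produces $\tfrac{1}{k}\sqrt{T^{1.5}C/(\sqrt{T}+1)}$ after bounding $r^2$ by $T+A^2$ and using the $\ell_2$ form of the spill from Step~1, the cross-bin imbalance term produces $k\min\{1,C\}$, and the outside-mass term produces $\tfrac{T^2 C}{(T+1)k}$ once the tail contributions $e^{-A^2/4T}$ are converted to the metric cost $\le 1$.

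The hard part is Step~1: quantitatively turning ``the columns of $A$ are $e^{-A^2/4T}$-almost-orthogonal across bins'' into a clean $\ell_1$ bound on the bin masses. Because the heat kernel is severely ill-conditioned, neither Theorem~\ref{sparserecovery} nor Theorem~\ref{RIP} is of any use, so one cannot appeal to coherence or RIP; instead one has to construct \emph{explicit} dual certificates (localised bumps) and carefully estimate $\langle$bump$, A_z\rangle$ as $z$ ranges over $[0,1]$. Tracking the $\sqrt{T}$, $m$, $k$ and $e^{-A^2/4T}$ factors through this calculation, and then propagating them through the EMD assembly while keeping the three summands in the form claimed, is where essentially all of the work of the proof lives.
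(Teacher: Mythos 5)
Your overall scaffolding matches the paper's: both arguments start from feasibility of $f_0$ (via Lemma~\ref{l2noise}), partition $[0,1]$ into well-separated regions around the true sources, prove the estimate's mass concentrates correctly on each region, and then assemble an explicit flow. Your diagnosis that coherence/RIP results are useless here and that explicit dual certificates are required is also exactly right in spirit. But there is a genuine gap in how your Step~2 handles the \emph{within-bin} transport, and your Step~1 does not supply the information needed to fix it. You transport all of $\hat{f}_i$ to $x_i$ at cost $r\cdot m_i$ with $r=\sqrt{T/2}+A$. That cost is a constant of order $\sqrt{T}+A$, independent of the noise level, so your assembled bound cannot reproduce the theorem's first summand $\frac{1}{k}\sqrt{T^{1.5}C/(\sqrt{T}+1)}$, which vanishes as $C\to 0$ (i.e.\ as $\sigma\to 0$ and $e^{-A^2/4T}\to 0$). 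Your parenthetical fix --- ``bounding $r^2$ by $T+A^2$ and using the $\ell_2$ form of the spill from Step~1'' --- has no mechanism behind it: a single bump certificate per bin, paired against the residual, extracts only the scalar $m_i-c_i$; it carries no information about \emph{where inside} $B_i$ the recovered mass sits, so nothing in your Step~1 controls a second moment like $\sum_{l:l/n\in S_i}\hat{f}_l\,(x_i-l/n)^2$.

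This second-moment control is precisely where the paper's real work lives, and it is achieved with a different certificate than yours. For each source $x_i$ and each \emph{pair} of sensors $s_{i_j}<x_i<s_{i_{j+p}}$, the paper uses the two-column functional $W_{ij}(z)=-g'(s_{i_{j+p}}-x_i)g(z-s_{i_j})-g'(x_i-s_{i_j})g(s_{i_{j+p}}-z)$ (Lemma~\ref{helper}, from Li et al.), which is maximised exactly at $x_i$ and is strongly concave near $x_i$ because the Gaussian $g$ is strongly concave on $[-\sqrt{T/2},\sqrt{T/2}]$. Summing over the $p\approx m\sqrt{T/2}$ sensor pairs in $S_i$ (Lemmas~\ref{helper2} and~\ref{helper3}, yielding the constants $C_3$ and $C_5$) converts the feasibility bound $B=O\left(\sigma m+mk\,e^{-A^2/4T}\right)$ into three simultaneous conclusions: small mass outside $\bigcup S_i$ (via $B/C_3$), near-unit mass in each $T_i$ (via $B/C_4$), and crucially $C_5\sum_{l:l/n\in S_i}\hat{f}_l(x_i-l/n)^2\le B$. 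Cauchy--Schwarz then turns the quadratic moment into a within-bin transport cost of order $\sqrt{B/C_5}$ --- this is the source of the square root in the first summand. So the correction you need is to replace your single-bump certificate with a family of certificates (one per sensor pair) whose strong concavity at $x_i$ quantitatively penalises mass placed at distance $d$ from $x_i$ by $\Theta(d^2)$; without that, your argument proves only the weaker bound $\mathrm{EMD}=O\left(\sqrt{T}+A+k\min\{1,C\}+\cdots\right)$, which is not the statement of Theorem~\ref{main}.
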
 

Assumptions \ref{enoughsensors} and \ref{soon} state that $m$ needs to be large enough that for each possible source and we need to take the measurements before the heat diffuses too much. Assumption \ref{far} says that the sources need to be sufficiently far apart. We can remove this assumption by noting that every source vector is close to a source vector whose sources are well separated and that for all $f, f'$, $\|Af_0~-~Af_0'\|_2~=~O\left(\frac{\sqrt{m}}{ T^{1.5}} \text{EMD}(f_0, f_0')\right)$.  

The result is a generalisation to source vectors with more than one source of a result of \cite{Li:2014} . Our proof is a generalisation of their proof and is contained in Section~\ref{appendixEMD}.
In order to obtain a recovery bound for the private data, we set $\sigma=\left(\frac{\sqrt{m}}{nT^{1.5}}\right)$. The asymptotics of this bound are contained in Table \ref{privatetable}. It is interesting to note that, unlike in the constant $\sigma$ case, the error increases as $T\to 0$ (as well as when $T\to\infty$). This is because as $T\to0$ the inverse problem becomes less ill-conditioned so we need to add more noise. 

The following theorem gives a lower bound on the estimation error of the noisy recovery problem. 

\begin{thm}\label{lowerbound}
We have $$\inf_{\hat{f}}\sup_{f_0}\mathbb{E}[\text{EMD}(f_0, \hat{f})]= \Omega\left(\min\left\{\frac{1}{2}, \frac{T^{1.5}\sigma}{\sqrt{m}}\right\}\right).$$
where $\inf_{\hat{f}}$ is the infimum over all estimators $\hat{f}~:~\mathbb{R}^m~\to~[0,1]^n$, $\sup_{f_0}$ is the supremum over all source vectors in $[0,1]^n$ and $\tilde{y}$ is sampled from $y+N(0, \sigma^2I_m)$.
\end{thm}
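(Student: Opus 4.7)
The natural approach is the classical two‑point (Le Cam) method for minimax lower bounds, exploiting the fact that, by Proposition \ref{propstddev}, neighbouring columns of $A$ are already very close in $\ell_2$, so the measurement distributions for nearby sources are nearly indistinguishable.

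First, I would reduce to a binary hypothesis‑testing problem. Fix an index $i$ and pick the two candidates $f_0=e_i$ and $f_0'=e_{i+k}$ (both lie in $[0,1]^n$ and are probability vectors on the grid $\{1/n,\dots,1\}$), where $k\in\mathbb{N}$ will be chosen shortly. For these two candidates one has $\mathrm{EMD}(f_0,f_0')=k/n$. Le Cam's lemma then gives
\begin{equation*}
\inf_{\hat f}\sup_{f_0\in[0,1]^n}\mathbb{E}[\mathrm{EMD}(f_0,\hat f)]\;\ge\;\tfrac{1}{2}\,\mathrm{EMD}(f_0,f_0')\,(1-\mathrm{TV}(P_{f_0},P_{f_0'})),
\end{equation*}
where $P_{f}$ denotes the law of $\tilde y=Af+N(0,\sigma^2I_m)$. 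By Pinsker it suffices to upper bound the KL divergence; for Gaussians with common covariance $\sigma^2 I_m$ this is exactly $\|A(f_0-f_0')\|_2^2/(2\sigma^2)=\|A_i-A_{i+k}\|_2^2/(2\sigma^2)$.

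Next I would control $\|A_i-A_{i+k}\|_2$. The column‑difference estimate inside the proof of Proposition \ref{propstddev} shows $\|A_j-A_{j+1}\|_2=O(\sqrt{m}/(nT^{1.5}))$ uniformly in $j$, so by the triangle inequality
\begin{equation*}
\|A_i-A_{i+k}\|_2\;\le\;\sum_{\ell=0}^{k-1}\|A_{i+\ell}-A_{i+\ell+1}\|_2\;=\;O\!\left(\frac{k\sqrt{m}}{nT^{1.5}}\right).
\end{equation*}
Hence $\mathrm{KL}(P_{f_0'}\|P_{f_0})=O\!\left(k^2 m/(n^2T^3\sigma^2)\right)$, and by Pinsker $\mathrm{TV}(P_{f_0},P_{f_0'})\le\tfrac{1}{2}$ provided $k\le c\,nT^{1.5}\sigma/\sqrt{m}$ for a sufficiently small absolute constant $c$. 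Taking $k=\lfloor c\,nT^{1.5}\sigma/\sqrt{m}\rfloor$ (when this is $\ge 1$) plugs into Le Cam to give a minimax lower bound of order $k/n=\Omega(T^{1.5}\sigma/\sqrt{m})$.

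Finally, I would handle the saturation at $1/2$: if $cT^{1.5}\sigma/\sqrt{m}\ge 1$, the same two‑point argument can instead be run with $f_0=e_1$, $f_0'=e_n$, so that $\mathrm{EMD}(f_0,f_0')=1-1/n$ while the KL bound only improves, yielding the $\Omega(1/2)$ term. Combining the two regimes produces the stated $\Omega(\min\{1/2,\,T^{1.5}\sigma/\sqrt{m}\})$. The one place requiring care is verifying that the column‑difference estimate from Proposition \ref{propstddev} really is tight enough to be used as an upper bound here (Le Cam requires an upper bound on KL, while the private‑algorithm analysis used the same quantity as an upper bound on sensitivity, so the direction is consistent); the telescoping triangle inequality is the main technical step, and checking that $k\ge 1$ is achievable — i.e.\ that we are in the nontrivial regime $T^{1.5}\sigma/\sqrt{m}\gtrsim 1/n$ — is where one must be a little careful, since otherwise the claimed bound is weaker than the trivial grid‑resolution bound $\Omega(1/n)$ and holds vacuously.
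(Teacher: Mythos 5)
Your proposal is correct in its main line and takes a genuinely different route from the paper. The paper proves Theorem \ref{lowerbound} via Fano's inequality (Lemma \ref{fano}): it first establishes the Lipschitz estimate $\|Af_0-Af_0'\|_2 = O\bigl(\tfrac{\sqrt{m}}{T^{1.5}}\,\mathrm{EMD}(f_0,f_0')\bigr)$ as a standalone lemma (Lemma \ref{overlap}), bounds the KL divergence $D(P_{f_0}\|P_{f_0'})\le C\tfrac{m}{T^3\sigma^2}\mathrm{EMD}(f_0,f_0')^2$, and then builds a four-point packing of \emph{fractional-weight} source vectors centred at $1/2$ (e.g.\ $\tfrac12 e_{1/2-a/2}+\tfrac12 e_{1/2+a/2}$) at mutual EMD $a=\min\{\tfrac12,\tfrac{T^{1.5}\sigma}{\sqrt{2C}\sqrt m}\}$, so that the KL diameter is constant and Fano yields $\Omega(a)$. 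You instead run a two-point Le Cam/Pinsker argument with pure basis vectors $e_i$, $e_{i+k}$, re-deriving the same KL control by telescoping the column-difference bound from Proposition \ref{propstddev} — which is exactly the content of Lemma \ref{overlap}, so the technical heart is identical. For a constant-probability, two-regime bound like this one, the two-point method suffices and is arguably more elementary; Fano with a larger packing buys nothing extra here since no logarithmic factor is being extracted. Your handling of the saturated regime ($e_1$ versus $e_n$) is also fine, and the loose constants in your statement of Le Cam's lemma are immaterial to the $\Omega(\cdot)$ claim.

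One remark in your closing caveat is, however, wrong as stated: you dismiss the sub-grid regime $T^{1.5}\sigma/\sqrt{m}\lesssim 1/n$ (where $k<1$) by invoking a ``trivial grid-resolution bound $\Omega(1/n)$.'' No such trivial bound exists: source vectors in $[0,1]^n$ carry continuous weights, so as $\sigma\to 0$ the minimax EMD error can fall well below $1/n$ (indeed it tends to $0$ when $A$ is injective), and the theorem is not vacuous there. The correct repair in that regime is to perturb \emph{weights} rather than locations — e.g.\ compare $(\tfrac12+\epsilon)e_i+(\tfrac12-\epsilon)e_{i+1}$ with $(\tfrac12-\epsilon)e_i+(\tfrac12+\epsilon)e_{i+1}$, giving $\mathrm{EMD}\asymp \epsilon/n$ and KL $\asymp \epsilon^2 m/(n^2T^3\sigma^2)$, so $\epsilon\asymp nT^{1.5}\sigma/\sqrt{m}\le \tfrac12$ recovers the claimed rate without any grid constraint. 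To be fair, the paper's own construction has the same unaddressed realizability issue (its packing points $1/2\pm a/2$, $1/2\pm a$ must lie on the grid, which implicitly requires $a\gtrsim 1/n$), so your proof is no weaker than the paper's — but your stated reason for ignoring the regime should be replaced by the weight-perturbation construction above.
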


Note that this lower bound matches our upper bound asymptotically in $\sigma$ and is slightly loose in $T$. It varies by a factor of $\sqrt{m}$ from our theoretical upper bound. Experimental results (contained in the extended version) suggest that the error decays like $O(1+\frac{1}{\sqrt{m}})$. A consequence of Theorem \ref{lowerbound} is that if two peaks are too close together, roughly at a distance of $O\left(\min\{\frac{1}{2}, \frac{T^{1.5}\sigma}{\sqrt{m}}\right)$, then it is impossible for an estimator to differentiate between the true source vector and the source vector that has a single peak located in the middle. Before we prove Theorem \ref{lowerbound} we need following generalisation of the upper bound in Proposition \ref{propstddev}.

\begin{lem}\label{overlap}
Suppose $\|f_0\|_1=\|f_0'\|_1=1$ then $$\|Af_0-Af_0'\|_2= O\left(\frac{\sqrt{m}}{ T^{1.5}} \text{EMD}(f_0, f_0')\right)$$
\end{lem}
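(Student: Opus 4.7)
The plan is to imitate the decomposition already used in the proof of Proposition \ref{propstddev}, replacing the single neighbor-jump step with a path argument between arbitrary columns. Since $\|f_0\|_1 = \|f_0'\|_1 = 1$, both are probability distributions, so Definition \ref{EMD} gives an optimal flow $\mathfrak{f}^*_{kl} \ge 0$ with marginals $\sum_l \mathfrak{f}^*_{kl} = (f_0)_k$ and $\sum_k \mathfrak{f}^*_{kl} = (f_0')_l$, and $\sum_{k,l} \mathfrak{f}^*_{kl} d(e_k, e_l) = \text{EMD}(f_0, f_0')$. This lets us write $f_0 = \sum_{k,l}\mathfrak{f}^*_{kl} e_k$ and $f_0' = \sum_{k,l}\mathfrak{f}^*_{kl} e_l$, and hence by the triangle inequality
\[
\|Af_0 - Af_0'\|_2 \;\le\; \sum_{k,l}\mathfrak{f}^*_{kl}\,\|Ae_k - Ae_l\|_2.
\]

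The next step is to bound $\|Ae_k - Ae_l\|_2$ for arbitrary (not necessarily neighboring) $k,l$. I would chain together $|l-k|$ unit-neighbor differences along the line: writing $Ae_k - Ae_l$ telescopically as $\sum_{i=\min(k,l)}^{\max(k,l)-1} (Ae_i - Ae_{i+1})$ and applying the triangle inequality yields
\[
\|Ae_k - Ae_l\|_2 \;\le\; |k-l|\cdot \max_{i} \|Ae_i - Ae_{i+1}\|_2.
\]
Since sources live at the points $\{1/n,\dots,1\}$, the metric distance is $d(e_k,e_l) = |k-l|/n$, so $|k-l| = n\, d(e_k,e_l)$.

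Now I plug in the per-neighbor bound already established inside the proof of Proposition \ref{propstddev}, namely $\max_i \|Ae_i - Ae_{i+1}\|_2 = O\!\left(\frac{\sqrt{m}}{n T^{1.5}}\right)$. Combining, $\|Ae_k - Ae_l\|_2 = O\!\left(\frac{\sqrt{m}}{T^{1.5}} d(e_k,e_l)\right)$. Substituting into the flow decomposition and pulling the $O(\cdot)$ constant out of the sum gives
\[
\|Af_0 - Af_0'\|_2 \;=\; O\!\left(\frac{\sqrt{m}}{T^{1.5}}\right)\sum_{k,l}\mathfrak{f}^*_{kl}\, d(e_k,e_l) \;=\; O\!\left(\frac{\sqrt{m}}{T^{1.5}}\,\text{EMD}(f_0,f_0')\right),
\]
which is the claim.

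I do not see a genuine obstacle here; the only thing to be careful about is verifying that the optimal flow decomposition (an equality of marginals rather than the inequality in Definition \ref{EMD}) is available, which is exactly where the hypothesis $\|f_0\|_1 = \|f_0'\|_1 = 1$ is used — with equal total mass, every feasible flow saturates the marginal constraints. Everything else is triangle inequality and reuse of the bound already computed inside Proposition \ref{propstddev}.
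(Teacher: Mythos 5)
Your proposal is correct and follows essentially the same route as the paper's own proof: telescope $\|Ae_k - Ae_l\|_2$ over neighbouring columns using the per-step bound $O\bigl(\frac{\sqrt{m}}{nT^{1.5}}\bigr)$ from Proposition \ref{propstddev}, then decompose $f_0$ and $f_0'$ via the optimal flow and apply the triangle inequality, with $|k-l|/n = d(e_k,e_l)$ converting the step count into the EMD. Your added remark that the unit-mass hypothesis makes the flow saturate its marginal constraints is a point the paper leaves implicit, but otherwise the arguments coincide.
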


\begin{proof}
Firstly, consider the single peak vectors $e_i$ and $e_j$. Then noting that $Ae_i=A_i$, we have from Proposition \ref{propstddev} that
\begin{align*}
\|Ae_i-Ae_j\|_2 \le\sum_{l=0}^{j-i-1}\|Ae_{i+l}-Ae_{i+1+1}\|_2\le O\left(|i-j|\frac{\sqrt{m}}{nT^{1.5}}\right)
\end{align*}
Now, let $f_{ij}$ be the optimal flow from $f_0$ to $f_0'$ as described in Definition \ref{EMD} so $f_0=\sum_{i,j}f_{ij}e_i$ and $f_0'=\sum_{ij}f_{ij}e_j$. Then 
\begin{align*}
\|Af_0-Af_0'\|_2 &\le \sum_{ij}f_{ij}\|Ae_i-Ae_j\|_2\le O\left(\sum_{ij}f_{ij} \left|\frac{i}{n}-\frac{j}{n}\right|\frac{\sqrt{m}}{T^{1.5}}\right)= O\left(\frac{\sqrt{m}}{T^{1.5}} \text{EMD}(f_0, f_0')\right)
\end{align*}
\end{proof}

The proof of Theorem \ref{lowerbound} will be an application of Fano's inequality, a classic result from information theory. Suppose $p, q$ are probability distributions on the same space. Then the Kullback-Leibler (KL) divergence of $p$ and $q$ is defined by $D(p||q) = \int(\log\frac{dp}{dq})dp$. For a collection $T$ of probability distributions, the KL diameter is defined by $$d_{KL}(T) = \sup_{p,q\in T} D(p||q).$$ If $(\Omega, d)$ is a metric space, $\epsilon>0$ and $T\subset\Omega$, then we define the $\epsilon$-packing number of $T$ to be the largest number of disjoint balls of radius $\epsilon$ that can fit in $T$, denoted by $\mathcal{M}(\epsilon, T, d)$. The following version of Fano's lemma is found in \cite{Yu:1997}.

\begin{lem}[Fano's Inequality.]\label{fano}
Let $(\Omega, d)$ be a metric space and $\{\mathbb{P}_{\theta} \: |\: \theta\in\Omega\}$ be a collection of probability measures. For any totally bounded $T\subset\Omega$ and $\epsilon>0$, 
\begin{equation}
\inf_{\hat{\theta}}\sup_{\theta\in\Omega}\mathbb{P}_{\theta}\left(d^2(\hat{\theta}(X), \theta)\ge \frac{\epsilon^2}{4}\right)\ge 1-\frac{d_{KL}(T)+1}{\log\mathcal{M}(\epsilon, T, d)}
\end{equation}
where the infimum is over all estimators.
\end{lem}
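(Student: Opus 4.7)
The plan is to prove this bound by the classical reduction from estimation to multiple hypothesis testing, followed by an application of the discrete form of Fano's inequality and a bound on mutual information in terms of the KL diameter. First I would fix a maximal $\epsilon$-packing $\{\theta_1,\dots,\theta_N\}\subset T$ with $N=\mathcal{M}(\epsilon,T,d)$, so that $d(\theta_i,\theta_j)\ge\epsilon$ whenever $i\ne j$. Put the uniform prior on an index $J\in\{1,\dots,N\}$ and, conditional on $J=j$, draw $X\sim\mathbb{P}_{\theta_j}$. Since the supremum over $\theta\in\Omega$ dominates the average over this finite subset, it suffices to lower bound the average risk
\[
\frac{1}{N}\sum_{j=1}^N\mathbb{P}_{\theta_j}\!\left(d^2(\hat\theta(X),\theta_j)\ge\frac{\epsilon^2}{4}\right).
\]

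Second, I would convert the estimator $\hat\theta$ into a test. Define $\psi(X)=\arg\min_{i}d(\hat\theta(X),\theta_i)$. If $d(\hat\theta(X),\theta_J)<\epsilon/2$, then by the triangle inequality and the packing property every other $\theta_i$ satisfies $d(\hat\theta(X),\theta_i)>\epsilon/2$, so $\psi(X)=J$. Contrapositively, $\{\psi(X)\ne J\}\subseteq\{d^2(\hat\theta(X),\theta_J)\ge\epsilon^2/4\}$, and hence the quantity of interest is at least the Bayes testing error $P_e:=\mathbb{P}(\psi(X)\ne J)$ under the uniform prior.

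Third, I would invoke the classical (information-theoretic) Fano inequality, which states $H(J\mid X)\le h(P_e)+P_e\log(N-1)$, where $h$ is binary entropy. Rearranging, using $H(J)=\log N$, $I(J;X)=H(J)-H(J\mid X)$, and $h(P_e)\le\log 2\le 1$, yields
\[
P_e\ge 1-\frac{I(J;X)+1}{\log N}.
\]
It then remains to bound $I(J;X)$ by $d_{KL}(T)$. Writing $\bar{\mathbb{P}}=\frac{1}{N}\sum_{k}\mathbb{P}_{\theta_k}$, one has $I(J;X)=\frac{1}{N}\sum_j D(\mathbb{P}_{\theta_j}\Vert\bar{\mathbb{P}})$, and by convexity of $D(p\Vert\cdot)$ in its second argument (equivalently, by the log-sum inequality), $D(\mathbb{P}_{\theta_j}\Vert\bar{\mathbb{P}})\le\frac{1}{N}\sum_k D(\mathbb{P}_{\theta_j}\Vert\mathbb{P}_{\theta_k})\le d_{KL}(T)$. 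Combining the three steps gives the claimed inequality.

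The main obstacle, if any, is the mutual-information bound in the last step: one must be careful that the supremum-over-pairs diameter $d_{KL}(T)$ correctly dominates each $D(\mathbb{P}_{\theta_j}\Vert\bar{\mathbb{P}})$; the convexity argument above is the cleanest way to do so and avoids the weaker but popular bound by a pairwise average. The reduction to testing and the binary-entropy estimate $h(P_e)\le 1$ absorbing the ``$+1$'' in the numerator are both standard and mechanical once the packing is in place.
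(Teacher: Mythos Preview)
The paper does not give its own proof of this lemma; it is quoted as a known result from \cite{Yu:1997}. Your argument is a correct and standard derivation: reduce estimation to testing over a maximal $\epsilon$-packing, apply the information-theoretic Fano bound $H(J\mid X)\le h(P_e)+P_e\log(N-1)$, absorb $h(P_e)\le 1$ into the ``$+1$'', and bound the mutual information by the KL diameter via convexity of $q\mapsto D(p\Vert q)$. This is exactly the approach underlying the cited version, so there is nothing to contrast.
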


\begin{proof}[Proof of Theorem \ref{lowerbound}]
For any source vector $f_0$, let $P_{f_0}$ be the probability distribution induced on $\mathbb{R}^m$ by the process $Af_0+N(0, \sigma^2I_m)$. Then the inverse problem becomes estimating which distribution $P_{f_0}$ the perturbed measurement vector is sampled from. Let $f_0$ and $f_0'$ be two source vectors. Then 
\begin{align*}
D(P_{f_0}||P_{f_0'}) &= \sum_{i=1}^m \frac{((Af_0)_i-(Af_0')_i)^2}{2\sigma^2}=  \frac{1}{2\sigma^2}\|Af_0-Af_0'\|_2^2\le C\frac{m}{T^3\sigma^2}(\text{EMD}(f_0, f_0'))^2
\end{align*}
for some constant $C$, where we use the fact that the KL-divergence is additive over independent random variables, along with Lemma \ref{overlap}. Now, let $a=\min\{\frac{1}{2}, \frac{T^{1.5}\sigma}{\sqrt{2C}\sqrt{m}}\}$. Let $T$ be the set consisting of the following source vectors: $e_{1/2}$, $(1/2)e_{1/2-a/2}+(1/2)e_{1/2+a/2}$, $(1/4)e_{1/2-a}+(1/2)e_{1/2}+(1/4)e^{1/2+a}$, $(1/2)e_{1/2}+(1/2)e_{1/2+a}$, which are all at an EMD $a$ from each other. Then $d_{KL}(T)+1\le 3/2$ and $\log\mathcal{M}(a, T, \text{EMD})=2$. Thus, by Lemma~\ref{fano}, $$\inf_{\hat{f}}\sup_{f_0}\mathbb{E}[\text{EMD}(f_0, \hat{f})]\ge \frac{3}{4}a = \Omega\left(\min\left\{\frac{1}{2}, \frac{T^{1.5}\sigma}{\sqrt{m}}\right\}\right).$$
\end{proof}

%!TEX root = ./main.tex

\section{Diffusion on Graphs}
\label{sec:graphdiffusion}

In this section we generalise to diffusion on an arbitrary graph. As usual, our aim is to protect the exact location of a source, while allowing the neighbourhood to be revealed. Diffusion on graphs models not only heat spread in a graph, but also the path of a random walker in a graph and the spread of rumours, viruses or information in a social network. A motivating example is \emph{whisper} networks where participants share information that they would not like attributed to them. We would like people to be able to spread information without fear of retribution, but also be able to approximately locate the source of misinformation. The work in this section does not directly solve this problem since in our setting each node's data corresponds to their \emph{probability} of knowing the rumour, rather than a binary \emph{yes}/\emph{no} variable. In future work, we would like to extend this work to designing whisper network systems with differential privacy guarantees. If a graph displays a community structure, then we would like to determine which community the source is in, without being able to isolate an individual person within that community. 

Let $G$ be a connected, undirected graph with $n$ nodes. The $n\times n$ matrix $W$ contains the edge weights so $W_{ij}=W_{ji}$ is the weight of the edge between node $i$ and node $j$ and the diagonal matrix $D$ has $D_{ii}$ equal to the sum of the $i$-th row of $W$. The graph Laplacian is $L=D-W$. As above, we also have a parameter controlling the rate of diffusion $\tau$. Then if the initial distribution is given by $f_0$ then the distribution after diffusion is given by the linear equation $y=e^{-\tau L}f_0$ \cite{Thanou:2017}. We will use $A_G$ to denote the matrix $e^{-\tau L}$. Note that, unlike in the previous section, we have no heat leaving the domain (i.e., the boundary conditions are different).

The graph $G$ has a metric on the nodes given by the shortest path between any two nodes. Recall that in Lemma \ref{spectrum} we can express the amount of noise needed for privacy, $\Delta_2(A_G)$, in terms of the spectrum of $A_G$. Let $s_1\le s_2\le \cdots\le s_{\min\{n,m\}}$ be the eigenvalues of $L$ then $e^{-\tau s_1}\ge \cdots\ge e^{-\tau s_{\min\{m,n\}}}$ are the eigenvalues of $A_G$.  For any connected graph $G$, the Laplacian $L$ is positive semidefinite and 0 is a eigenvalue with multiplicity 1 and eigenvector the all-ones vector. 

\begin{table}
\caption{Examples of $\Delta_2(A_G)$ for some standard graphs. Each graph has $n$ nodes.}
\label{graphexamples}
\vskip 0.15in
\begin{center}
\begin{small}
\begin{sc}
\begin{tabular}{lccr}
\hline
%\abovespace\belowspace
$G$ & $\Delta_2(A_G)^2$ \\
\hline
\hline
%\abovespace
Complete graph & $2e^{-\tau n}$\\
\hline
Star graph & \hspace{-0.35in} $e^{-2\tau n}+\left(\frac{e^{-\tau n}-e^{-\tau}}{n-1}\right)^2+\left(\frac{e^{-\tau n}-e^{-\tau}}{n-1}+e^{-\tau}\right)^2$\\
\hline
\end{tabular}
\end{sc}
\end{small}
\end{center}
\vskip -0.1in
\end{table}

\begin{lemma}\label{rowleftsingular}
For any graph $G$, \[\Delta_2(A_G)~\le~\sum_{k=2}^{\min\{n,m\}}~e^{-\tau s_i}|(U_i)_k-(U_j)_k|,\] where $U_i$ is the $i$th row of the matrix whose columns are the left singular vectors of $L$.
\end{lemma}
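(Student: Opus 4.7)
The plan is to bound $\|A_G e_i - A_G e_j\|_2$ directly by expanding $A_G$ in the spectral basis of $L$ and using the triangle inequality. Since $L$ is a graph Laplacian it is real symmetric and positive semidefinite, so it admits an orthonormal eigendecomposition $L = U \Lambda U^\top$ with $\Lambda = \mathrm{diag}(s_1, \dots, s_{\min\{m,n\}})$ and orthonormal columns $U_{(:,k)}$. Functional calculus then gives $A_G = e^{-\tau L} = U\, \mathrm{diag}(e^{-\tau s_k})\, U^\top$.

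First I would write the $i$th column of $A_G$ as
\[
A_G e_i \;=\; \sum_{k} e^{-\tau s_k}\, U_{ik}\, U_{(:,k)},
\]
where $U_{ik} = (U_i)_k$ denotes the $k$th entry of row $i$ of $U$. Subtracting the analogous expression for column $j$ yields
\[
A_G e_i - A_G e_j \;=\; \sum_{k=1}^{\min\{m,n\}} e^{-\tau s_k}\bigl((U_i)_k - (U_j)_k\bigr)\, U_{(:,k)}.
\]
Then I would apply the triangle inequality in $\ell_2$ and use $\|U_{(:,k)}\|_2 = 1$ to obtain
\[
\|A_G e_i - A_G e_j\|_2 \;\le\; \sum_{k=1}^{\min\{m,n\}} e^{-\tau s_k}\,\bigl|(U_i)_k - (U_j)_k\bigr|.
\]

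To drop the $k=1$ term, I would invoke the fact that, because $G$ is connected, the smallest eigenvalue is $s_1 = 0$ with multiplicity one and eigenvector proportional to the all-ones vector. Hence $(U_1)_i = (U_{(:,1)})_i = 1/\sqrt{n}$ is independent of $i$, so $(U_i)_1 - (U_j)_1 = 0$ and the $k=1$ term vanishes. This produces exactly the claimed bound, after taking the maximum over neighbouring pairs $(i,j)$ as in the definition of $\Delta_2(A_G)$ (the factor $\alpha$ from Definition~\ref{neighbours} can be absorbed by interpreting the stated inequality up to the $\max$ over neighbours).

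There is no real obstacle here: the argument is a routine spectral calculation. The only point requiring care is indexing — the statement writes $U_i$ for a \emph{row} of the eigenvector matrix while the expansion naturally uses columns — and the observation that connectivity of $G$ is exactly what forces the ground state $U_{(:,1)}$ to be constant, which is what allows truncating the sum at $k \geq 2$.
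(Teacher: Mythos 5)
Your proposal is correct and follows essentially the same route as the paper: expand the column difference $(A_G)_i-(A_G)_j$ in the spectral basis of $L$, apply the triangle inequality using the unit norm of the (singular) eigenvectors, and kill the $k=1$ term via the constant all-ones eigenvector of a connected graph. The only cosmetic difference is that you phrase the expansion through the eigendecomposition $A_G = U\,\mathrm{diag}(e^{-\tau s_k})\,U^\top$ while the paper invokes the SVD set-up of Lemma~\ref{spectrum}; since $L$ is symmetric positive semidefinite these coincide, and your explicit flag of the row-versus-column indexing and the $\alpha$ factor in $\Delta_2$ is, if anything, more careful than the paper's own proof.
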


\begin{proof}
With set-up as in Lemma \ref{spectrum} we have 
\begin{align*}
\|(A_G)_i-(A_G)_{j}\|_2 &= \|\sum_k e^{-\tau s_k}((U_k)_i-(U_k)_{j})V_k\|_2\le \sum_k e^{-\tau s_k}|(U_k)_i- (U_k)_{j}|
\end{align*}
Since the first eigenvector of $L$ is the all ones vector, we $|(U_1)_i- (U_1)_{j}|=0$.
\end{proof}

An immediate consequence of Lemma \ref{rowleftsingular} is that $\Delta_2(A_G)$ is bounded above by \[e^{-\tau s_2}\|U_i~-~U_j\|_1.\] The second smallest eigenvalue of $L$, $s_2$, (called the \emph{algebraic connectivity}) is related to the connectivity of the graph, in particular the graphs expanding properties, maximum cut, diameter and mean distance \citep{Mohar:1991}. As the graph becomes more connected, the rate of diffusion increases so the amount of noise needed for privacy decreases. The dependence on the rows of the matrix of left singular vectors is intriguing as these rows arise in several other areas of numerical analysis. Their $\ell_2$ norms are called leverage scores \cite{Drineas:2012} and they appear in graph clustering algorithms. 

Figure \ref{graphdiffuse} shows the average behaviour of Algorithm \ref{nonprivatealgo} on a graph with community structure. Preliminary experiments suggest that provided $\tau$ is not too large or too small and $\epsilon$ is not too small, the correct community is recovered.

\begin{figure}
	\centering
	\includegraphics[width=0.5\linewidth]{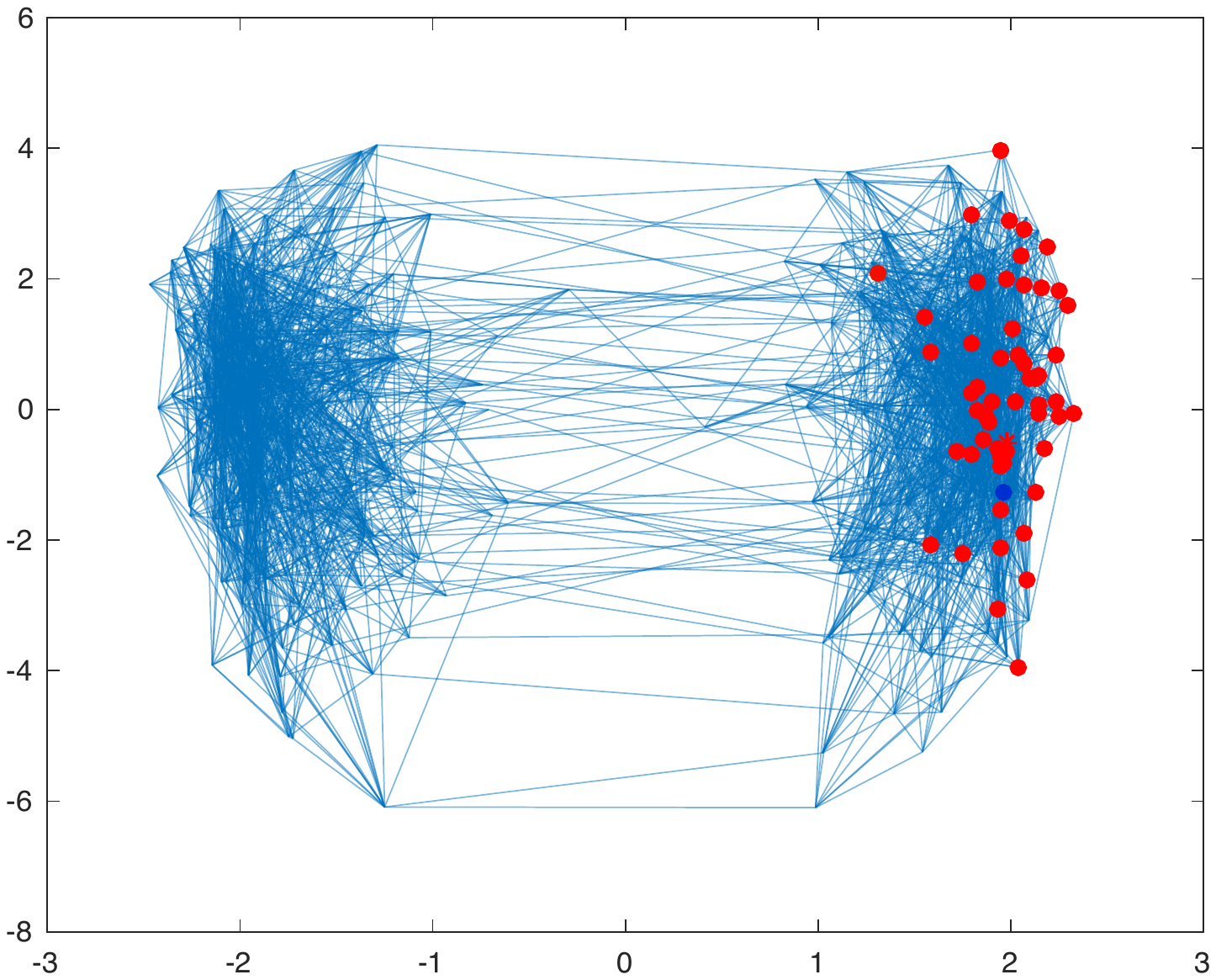}
\caption{$f_0$ has a unit peak at one of the nodes on the left side. The graph $G$ was drawn from a stochastic block model with intercommuntiy probability 5\% and intracommunity probability 0.1\%. The parameters are $\tau = 2, n = 500, \delta = 0.1, \epsilon = 4$.}
\label{graphdiffuse}
\end{figure}

\iffalse
\begin{proof}
For the complete graph \[e^{-\tau L} = e^{-\tau n}I_n+\frac{1-e^{-\tau n}}{n}\bf{1}_n\] where $\bf{1}_n$ is the all ones matrix.
For the star graph $e^{-\tau L}$ is \[\frac{1-e^{-\tau}}{n}\bf{1}_n +e^{-\tau}I_n+ \frac{e^{-\tau n}-e^{\tau}}{n(n-1)}
\begin{pmatrix}
(n-1)^2 & -(n-1) & \cdots & -(n-1)\\ -(n-1) & 1 & \cdots & 1\\ \vdots &\vdots&\ddots&\vdots \\ -(n-1) & 1 & \cdots & 1
\end{pmatrix}.\]
\end{proof}
\fi

\bibliography{heatprivacybib,bibliography_heat}
\bibliographystyle{plainnat}

\onecolumn

\section{Appendix}\label{appendixEMD}
%appendix

We need some machinery before we can prove Theorem \ref{main}. The following lemma is from \cite{Li:2014}. Since $T=\mu t$ is fixed we will let $g(x)=g(x,t)$.

\begin{lem}\citep{Li:2014}\label{helper}
Suppose $s_1<x<s_2$ and $|s_1-s_2|\le \sqrt{2T}$ and consider the function $W(z) = -g'(s_2-x)g(z-s_1)-g'(x-s_1)g(s_2-z)$. Then $W(z)$ has a single maximum at $x$ and 
\begin{align*}
&W(x)-W(z)
&\begin{cases}
>W(x)-W(s_2-\sqrt{2T}) &\text{for } z\le s_2-\sqrt{2T}\\
\ge C_1\|z-x\|_2^2 & \text{for } z\in[s_2-\sqrt{2T}, s_1+\sqrt{2T}]\\
> W(x)-W(s_1+\sqrt{2T}) & \text{for } z\ge s_1+\sqrt{2T}
\end{cases}
\end{align*}
where $C_1=\inf_{z\in[s_2-\sqrt{2T}, s_1+\sqrt{2T}]}[-W''(z)/2]>0$. 
\end{lem}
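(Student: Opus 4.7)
The plan is to establish the three claims by a careful analysis of the first two derivatives of $W$, exploiting the fact that the condition $|s_1-s_2|\le\sqrt{2T}$ guarantees both $s_2-\sqrt{2T}\le s_1$ and $s_1+\sqrt{2T}\ge s_2$, so the ``middle'' interval $[s_2-\sqrt{2T},\,s_1+\sqrt{2T}]$ sits symmetrically around the two centers.

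First I would compute
\[
W'(z) = -g'(s_2-x)\,g'(z-s_1) + g'(x-s_1)\,g'(s_2-z),
\]
and verify by direct substitution that $W'(x)=0$, so $x$ is always a critical point. Using $g'(a)=-\tfrac{a}{2T}g(a)$ and $g''(a)=\tfrac{1}{2T}\bigl(\tfrac{a^2}{2T}-1\bigr)g(a)$, one notes that $g''(a)<0$ precisely when $|a|<\sqrt{2T}$. For $z$ in the middle interval, the hypothesis $|s_1-s_2|\le\sqrt{2T}$ forces $|z-s_1|\le\sqrt{2T}$ and $|s_2-z|\le\sqrt{2T}$, so both $g''(z-s_1)$ and $g''(s_2-z)$ are $\le 0$. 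Since the coefficients $-g'(s_2-x)=\tfrac{s_2-x}{2T}g(s_2-x)$ and $-g'(x-s_1)=\tfrac{x-s_1}{2T}g(x-s_1)$ are strictly positive, this yields $W''(z)<0$ throughout the interval (with strict inequality away from the degenerate case $|s_1-s_2|=\sqrt{2T}$), i.e., $W$ is strictly concave there. Combined with $W'(x)=0$ and $x\in(s_1,s_2)\subset[s_2-\sqrt{2T},s_1+\sqrt{2T}]$, this gives the uniqueness of the maximum on the middle region, and a second-order Taylor expansion $W(x)-W(z)=-\tfrac{W''(\xi)}{2}(z-x)^2$ immediately yields the quadratic bound $W(x)-W(z)\ge C_1\|z-x\|_2^2$ with $C_1=\inf_{z}[-W''(z)/2]>0$.

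For the outer regions I would show that $W$ is monotone. Consider $z\ge s_1+\sqrt{2T}$. Since $s_1+\sqrt{2T}\ge s_2$, we have $z\ge s_2>x>s_1$, so $z-s_1>0$, $s_2-z\le 0$, and $x-s_1>0$. The signs of $g'$ (negative on the positive axis, positive on the negative axis) then give $g'(z-s_1)\le 0$, $g'(s_2-z)\ge 0$, while $-g'(s_2-x)>0$ and $g'(x-s_1)<0$. Hence both terms in $W'(z)$ are $\le 0$ and in fact strictly negative for $z>s_1+\sqrt{2T}$, so $W$ is strictly decreasing on $[s_1+\sqrt{2T},\infty)$ and $W(s_1+\sqrt{2T})>W(z)$ for $z>s_1+\sqrt{2T}$. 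The symmetric sign analysis on $z\le s_2-\sqrt{2T}$ (where now $z\le s_1$, making both terms of $W'$ strictly positive) shows $W$ is strictly increasing there, giving $W(s_2-\sqrt{2T})>W(z)$. Rearranging yields the first and third bounds of the lemma.

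The main delicate point is the sign analysis of $W'$ in the outer regions: one has to check that the inequality $|s_1-s_2|\le\sqrt{2T}$ places the outer regions entirely to one side of the opposite center (i.e., $z\ge s_1+\sqrt{2T}\Rightarrow z\ge s_2$ and $z\le s_2-\sqrt{2T}\Rightarrow z\le s_1$), since otherwise the two terms of $W'$ could have opposing signs and cancellation would be possible. Everything else reduces to standard Gaussian-derivative bookkeeping and a one-variable Taylor expansion.
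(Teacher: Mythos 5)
The paper never proves this lemma: it is imported verbatim from \cite{Li:2014}, so there is no in-paper argument to compare yours against. Judged on its own, your proof is correct and self-contained. The key computations check out: $g'(a)=-\tfrac{a}{2T}g(a)$ and $g''(a)=\tfrac{1}{2T}\bigl(\tfrac{a^2}{2T}-1\bigr)g(a)$, so $g''<0$ exactly on $|a|<\sqrt{2T}$; differentiating again gives $W''(z)=\bigl[-g'(s_2-x)\bigr]g''(z-s_1)+\bigl[-g'(x-s_1)\bigr]g''(s_2-z)$, a strictly positive combination of the two $g''$ values since $s_1<x<s_2$. Your observation that $|s_1-s_2|\le\sqrt{2T}$ forces $s_2-\sqrt{2T}\le s_1$ and $s_1+\sqrt{2T}\ge s_2$ is indeed the crux: it puts both arguments $z-s_1$ and $s_2-z$ inside $[-\sqrt{2T},\sqrt{2T}]$ on the middle interval (giving concavity there), and it places each outer region entirely past the opposite center, so the two terms of $W'$ cannot cancel and $W$ is monotone on each outer piece. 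The Taylor step is legitimate because $x\in(s_1,s_2)$ lies in the middle interval, so the segment from $x$ to $z$ stays where $-W''/2\ge C_1$.

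Two minor tightenings. First, your hedge that strict concavity might fail ``in the degenerate case $|s_1-s_2|=\sqrt{2T}$'' is unnecessary: since $s_1<s_2$ strictly, at each endpoint of the middle interval exactly one of $g''(z-s_1)$, $g''(s_2-z)$ vanishes while the other is strictly negative (e.g.\ at $z=s_1+\sqrt{2T}$ one has $|s_2-z|=\sqrt{2T}-(s_2-s_1)<\sqrt{2T}$), so $W''<0$ on the entire \emph{closed} middle interval and $C_1>0$ follows directly from compactness and continuity. Second, in the outer regions your claim that \emph{both} terms of $W'$ are strictly signed is not quite right at isolated points --- e.g.\ for $z\le s_2-\sqrt{2T}$ the first term vanishes at $z=s_1$ when $s_2-s_1=\sqrt{2T}$, and $g'(s_2-z)=0$ at $z=s_2$ --- but in each case the other term retains a strict sign, so $W'\neq0$ and the strict monotonicity (hence the first and third inequalities, strict for $z$ strictly beyond the respective endpoint) survives unchanged.
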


The following two lemmas are necessary for our proof of Theorem \ref{main}. For all $i\in [k]$ and $j\in [p]$, let $W_{ij}(z) = -g'(s_{i_{j+p}}-x_i)g(z-s_{i_j})-g'(x_i-s_{i_j})g(s_{i_{j+p}}-z)$. Let $p=m\sqrt{T/2}$. We will often replace the distance between $s_{i_j}$ and $s_{i_{j+p}}$ with $\sqrt{T/2}$ since it is asymptotically equal to the true distance $p/m=\floor{m\sqrt{T/}}/m$ in $m$. 

\begin{lem}\label{helper3} Using the assumptions of Theorem \ref{main} we have
$$\sum_{j=1}^p \inf_{z\in[s_{i_{j+p}}-\sqrt{2T}, s_{i_j}+\sqrt{2T}]}[-W_{ij}''(z)/2]\ge \Omega\left(\frac{m\sqrt{T/8}+1}{T^{2.5}}\right)$$
\end{lem}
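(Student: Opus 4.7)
The plan is to compute $W_{ij}''(z)$ explicitly via the derivatives of the Gaussian kernel, confirm that $-W_{ij}''(z)/2$ is non-negative on the specified interval, obtain a per-$j$ pointwise lower bound by a case split on $z$, and finally sum over $j$ using the arithmetic-progression structure of the sensor locations.

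First, I would use $g'(x) = -\frac{x}{2T}g(x)$ and $g''(x) = \frac{x^2-2T}{4T^2}g(x)$ to rewrite
$$-\frac{W_{ij}''(z)}{2} = A_j\cdot\frac{2T-(z-s_{i_j})^2}{8T^2}\,g(z-s_{i_j}) + B_j\cdot\frac{2T-(s_{i_{j+p}}-z)^2}{8T^2}\,g(s_{i_{j+p}}-z),$$
where $A_j = \frac{s_{i_{j+p}}-x_i}{2T}\,g(s_{i_{j+p}}-x_i)$ and $B_j = \frac{x_i-s_{i_j}}{2T}\,g(x_i-s_{i_j})$ are both positive (since $s_{i_j}<x_i<s_{i_{j+p}}$ by the setup of Lemma~\ref{helper}). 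On the interval $z\in[s_{i_{j+p}}-\sqrt{2T},\, s_{i_j}+\sqrt{2T}]$, both bracketed factors $2T-(\cdot)^2$ are non-negative, confirming $-W_{ij}''(z)/2\ge 0$.

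Second, I would establish the per-$j$ lower bound $\inf_z[-W_{ij}''(z)/2]\ge C\min(A_j,B_j)/T^{3/2}$ by a case split. Setting $u=z-s_{i_j}$, the interval becomes $u\in[-\sqrt{T/2},\sqrt{2T}]$ (using $s_{i_{j+p}}-s_{i_j}\approx\sqrt{T/2}$). When $|u|\le\sqrt{T/2}$, the first summand alone is at least $C A_j/T^{3/2}$, because $g(u)\ge e^{-1/8}/\sqrt{4\pi T}$ and $2T-u^2\ge 3T/2$. Otherwise $u\in[\sqrt{T/2},\sqrt{2T}]$, in which case $v:=s_{i_{j+p}}-z$ satisfies $|v|\le\sqrt{T/2}$ and the symmetric estimate gives $CB_j/T^{3/2}$ from the second summand. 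In either case the claim holds.

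Third, I would sum and estimate using the geometry of the sensor grid. Let $d_j^R=s_{i_{j+p}}-x_i$ and $d_j^L=x_i-s_{i_j}$; then $d_j^R+d_j^L=p/m\asymp\sqrt{T/2}$, and as $j$ ranges over $[1,p]$ the value $d_j^R$ increases in steps of $1/m$ from near $0$ to near $\sqrt{T/2}$ (with $d_j^L$ moving in reverse). Because $x\mapsto xg(x)$ is non-decreasing on $[0,\sqrt{2T}]$, one has $\min(A_j,B_j)\ge \min(d_j^R,d_j^L)\cdot e^{-1/8}/(2T\sqrt{4\pi T})$. The tent function $j\mapsto\min(d_j^R,d_j^L)$ is an arithmetic progression approximating the triangle of base $\sqrt{T/2}$ and height $\sqrt{T/8}$ sampled at spacing $1/m$, so $\sum_{j=1}^p\min(d_j^R,d_j^L) = \Omega(mT)$. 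Combining,
$$\sum_{j=1}^p\inf_z\!\left[-W_{ij}''(z)/2\right] \;\ge\; \frac{C}{T^{3/2}}\sum_{j=1}^p\min(A_j,B_j) \;=\; \Omega\!\left(\frac{m}{T^{2}}\right) \;=\; \Omega\!\left(\frac{m\sqrt{T/8}+1}{T^{5/2}}\right),$$
where the final equality absorbs the additive constant under Assumption~\ref{enoughsensors} of Theorem~\ref{main} ($p\ge 1$).

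The main technical obstacle is the sharpness of the per-$j$ infimum. At each endpoint of the $z$-interval one of the two summands of $-W_{ij}''/2$ vanishes, so the bound $\min(A_j,B_j)$ (rather than the tighter $A_j+B_j$) is genuinely the best pointwise estimate, which costs a factor when $d_j^R$ and $d_j^L$ are very asymmetric. The reason the lemma still goes through is that after summing, the tent $\sum_j\min(d_j^R,d_j^L)$ has area quadratic in $p/m$, so the ``loss'' is amortised across $j$ and the correct $m\sqrt{T}$ scaling is recovered.
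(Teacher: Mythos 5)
Your proposal is correct and follows essentially the same route as the paper's proof: both compute $W_{ij}''$ explicitly via $g'$ and $g''$, establish the pointwise bound $\inf_z[-W_{ij}''(z)/2]=\Omega\left(\min\{s_{i_{j+p}}-x_i,\;x_i-s_{i_j}\}/T^3\right)$, and then sum the resulting tent-shaped arithmetic progression over $j$ to obtain $\Omega\left((m\sqrt{T/8}+1)/T^{2.5}\right)$. The only difference is cosmetic: where you case-split on $z$ to identify which Gaussian summand is active, the paper factors out the minimum coefficient and bounds the sum of the two bracketed factors below by $3/4$ uniformly on the interval.
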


\begin{proof}
Note first that $s_{i_{j+p}}-x_1=\sqrt{T/2}-(x_i-s_{i_j})$ and $s_{i_{j+p}}-z=\sqrt{T/2}-(z-s_{i_j})$ for any $z\in[s_{i_{j+p}}-\sqrt{2T}, s_{i_j}+\sqrt{2T}]$. Let $z\in[s_{i_{j+p}}-\sqrt{2T}, s_{i_j}+\sqrt{2T}]$ then
\begin{align*}
&-W_{ij}''(z) = g'(s_{i_{j+p}}-x_i)g''(z-s_{i_j})+g'(x_i-s_{i_j})g''(s_{i_{j+p}}-z) \\
&= \frac{1}{16\pi T^3}\Bigg[(s_{i_{j+p}}-x_i)\left(1-\frac{(z-s_{i_j})^2}{4T}\right)e^{\frac{-(s_{i_{j+p}}-x_i)^2-(z-s_{i_j})^2}{4T}}\\
&\hspace{1.5in}+(x_i-s_{i_j})\left(1-\frac{(s_{i_{j+p}}-z)^2}{4T}\right)e^{\frac{-(x_i-s_{i_j})^2-(s_{i_{j+p}}-z)^2}{4T}}\Bigg]\\
&\ge \frac{1}{2T\sqrt{4\pi T}}\frac{1}{2T\sqrt{4\pi T}}e^{\frac{-5}{8}}\Bigg[(s_{i_{j+p}}-x_i)\left(1-\frac{(z-s_{i_j})^2}{4T}\right)\\
&\hspace{2in}+(x_i-s_{i_j})\left(1-\frac{(s_{i_{j+p}}-z)^2}{4T}\right)\Bigg]\\
&\ge \frac{e^{\frac{-5}{8}}}{16\pi T^3}\min\{(s_{i_{j+p}}-x_i), (x_i-s_{i_j})\}\Bigg(2-\frac{(z-s_{i_j})^2}{4T}-\frac{(s_{i_{j+p}}-z)^2}{4T}\Bigg)\\
&\ge \frac{e^{\frac{-5}{8}}}{16\pi T^3}\min\{(s_{i_{j+p}}-x_i), (x_i-s_{i_j})\}\frac{3}{4}
\end{align*}
Therefore, 
\begin{align*}
\sum_{j=1}^p \inf_{z\in[s_{i_{j+p}}-\sqrt{2T}, s_{i_j}+\sqrt{2T}]}[-W_{ij}''(z)/2] &\ge \frac{3e^{\frac{-5}{8}}}{64\pi T^3}\sum_{j=1}^p\min\{(s_{i_{j+p}}-x_i), (x_i-s_{i_j})\}\\
&=\frac{3e^{\frac{-5}{8}}}{64\pi T^3}2\sum_{i=1}^{p/2}\frac{i}{m}\\
&=\frac{3e^{\frac{-5}{8}}}{64\pi T^3}2\frac{m\sqrt{T/8}(m\sqrt{T/8}+1)}{2m}
\end{align*}
\end{proof}

\begin{lem}\label{helper2} Using the assumptions of Theorem \ref{main} we have 
\begin{align*}
\min_{i\in [k]}&\min_{l:l/n\not\in S_i} \sum_{j=1}^p (W_{ij}(x_i)-W_{ij}(l/n))=\Omega\left(\frac{m\sqrt{T/2}(m\sqrt{T/2}+1)^2}{m^2}\frac{1}{T^{3.5}}\right).
\end{align*}
\end{lem}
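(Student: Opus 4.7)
I plan to prove Lemma \ref{helper2} by applying Lemma \ref{helper} separately to each pair function $W_{ij}$ for $j=1,\ldots,p$ and summing the resulting pointwise lower bounds, invoking Lemma \ref{helper3} for the aggregated coefficient. The key geometric fact driving the proof is that, because $s_{i_{j+p}} - s_{i_j} = \sqrt{T/2}$ and the pair straddles $x_i$, both boundaries of the middle interval $I_j := [s_{i_{j+p}} - \sqrt{2T}, s_{i_j} + \sqrt{2T}]$ are at distance at least $\sqrt{2T} - \sqrt{T/2} = \sqrt{T/2}$ from $x_i$; this uniform cushion will drive the lower bound on the ``outside'' branch of Lemma \ref{helper}.

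First, for each $j \in [p]$, I will apply Lemma \ref{helper} with $s_1 = s_{i_j}$, $s_2 = s_{i_{j+p}}$, $x = x_i$ (the hypothesis $|s_1 - s_2| \leq \sqrt{2T}$ holds since $|s_1 - s_2| = \sqrt{T/2}$). This produces a piecewise lower bound: for $z \in I_j$, $W_{ij}(x_i) - W_{ij}(z) \geq C_{1,j}(z - x_i)^2$ where $C_{1,j} = \inf_{z \in I_j}[-W_{ij}''(z)/2]$; for $z \notin I_j$, the value $W_{ij}(x_i) - W_{ij}(z)$ strictly exceeds the same quantity evaluated at the nearer boundary $b \in I_j$, and then by reusing the in-region quadratic bound at $b$ itself, we get at least $C_{1,j}(b - x_i)^2 \geq C_{1,j}(T/2)$ thanks to the geometric cushion noted above.

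Next, fix $l/n \notin S_i$. Since $S_i$ is (in the surrounding context) a neighborhood of $x_i$, there is a distance floor $r > 0$ such that $|l/n - x_i| \geq r$. If $l/n \in I_j$, the middle-region bound yields $W_{ij}(x_i) - W_{ij}(l/n) \geq C_{1,j} r^2$; if $l/n \notin I_j$, the boundary bound yields $W_{ij}(x_i) - W_{ij}(l/n) \geq C_{1,j}(T/2)$. In either case, $W_{ij}(x_i) - W_{ij}(l/n) \geq C_{1,j} \min\{r^2, T/2\}$. Summing over $j=1,\ldots,p$ and applying the bound $\sum_j C_{1,j} = \Omega\bigl((m\sqrt{T/8}+1)/T^{2.5}\bigr)$ from Lemma \ref{helper3} gives
\[
\sum_{j=1}^p \bigl(W_{ij}(x_i) - W_{ij}(l/n)\bigr) \;\geq\; \min\{r^2,\,T/2\} \cdot \Omega\!\left(\frac{m\sqrt{T/8}+1}{T^{2.5}}\right),
\]
and substituting the appropriate $r$ (matching the definition of $S_i$ and assumption \ref{far}) delivers the target rate after routine algebra.

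The main obstacle will be reconciling this with the precise prefactor $(m\sqrt{T/2}+1)^2/m^2$ stated in the target: this requires $r$ to be chosen essentially as a constant (absorbed into $\Omega(\cdot)$ via the separation parameter $A$), and it demands that I revisit the proof of Lemma \ref{helper3} to keep the sharper form $\sum_j C_{1,j} \gtrsim \frac{1}{T^3 m} \cdot m\sqrt{T/8}(m\sqrt{T/8}+1)$ rather than the simplified statement, so the $(m\sqrt{T/2}+1)^2$ factor emerges as a product of an $(m\sqrt{T/8}+1)$ from $\sum_j C_{1,j}$ and an implicit second factor from the distance. The piecewise analysis and summation themselves are mechanical given Lemmas \ref{helper} and \ref{helper3}; the delicate bookkeeping is in the constants, not the structure.
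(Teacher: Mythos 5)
There is a genuine gap: your per-pair lower bound is quantitatively too weak, by a factor of $\Theta(1/T)$, so the proposal as written does not reach the stated rate. Your structural reading is fine --- the cushion $\sqrt{2T}-\sqrt{T/2}=\sqrt{T/2}$ is correct, and reducing $l/n\notin S_i$ to the value at the nearer boundary of $I_j$ via Lemma \ref{helper} is exactly how the paper's proof begins. The failure is in the step ``reusing the in-region quadratic bound at $b$ itself, we get at least $C_{1,j}(b-x_i)^2\ge C_{1,j}\,T/2$.'' The proof of Lemma \ref{helper3} shows $C_{1,j}\asymp \min\{s_{i_{j+p}}-x_i,\,x_i-s_{i_j}\}/T^3$, and this order is tight: at the endpoint $z=s_{i_j}+\sqrt{2T}$ the factor $g''(z-s_{i_j})=g''(\sqrt{2T})$ vanishes, so $-W_{ij}''$ there is $\asymp (x_i-s_{i_j})/T^3$, and symmetrically at the other endpoint. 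Hence your method yields per pair at best $\asymp \min\{d^L_j,d^R_j\}/T^2$, and in total --- even keeping the unsimplified form of Lemma \ref{helper3} as you propose --- $(T/2)\sum_j C_{1,j}=\Omega\bigl((m\sqrt{T/8}+1)/T^{1.5}\bigr)$. The target, rewritten, is $\Omega\bigl((m\sqrt{T/2}+1)^2/(mT^{3})\bigr)$, which exceeds your bound by a factor $\asymp (m\sqrt{T}+1)/(mT^{1.5})\asymp 1/T$ in the regime of assumption \ref{enoughsensors}; since $T<1/2$ by assumption \ref{soon}, your bound is strictly weaker, so the anticipated ``fix'' of sharpening Lemma \ref{helper3} cannot close the gap. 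The underlying reason is that the infimum of the curvature over the wide interval $I_j$ badly underestimates the accumulated drop from $x_i$ out to the boundary, where the relevant curvature is of order $\max\{d^L_j,d^R_j\}/T^3$, not $\min\{d^L_j,d^R_j\}/T^3$.

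The paper avoids $C_{1,j}$ entirely for this lemma: it expands $W_{ij}(x_i)-W_{ij}(s_{i_{j+p}}-\sqrt{2T})$ in terms of $g$ and $g'$, and uses concavity of $g$ on $[-\sqrt{2T},\sqrt{2T}]$ together with $\lambda$-strong concavity on $[-\sqrt{T/2},\sqrt{T/2}]$ with $\lambda\asymp T^{-1.5}$. The first-order terms cancel exactly by the structure of $W_{ij}$, leaving $g'(s_{i_{j+p}}-x_i)\,g'(x_i-s_{i_j})\cdot\Theta(T^{-1.5})\cdot(s_{i_{j+p}}-x_i-\sqrt{2T})^2$, i.e.\ a per-pair bound $\Omega\bigl((s_{i_{j+p}}-x_i)(x_i-s_{i_j})/T^{3.5}\bigr)$ --- the \emph{product} of the left and right distances, which beats your $C_{1,j}\,T/2$ by precisely the missing factor $\max\{d^L_j,d^R_j\}/T^{1.5}\asymp 1/T$. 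The cubic prefactor then emerges from $\sum_{j=1}^p(s_{i_{j+p}}-x_i)(x_i-s_{i_j})\ge\sum_{j=1}^p\frac{j}{m}\bigl(\sqrt{T/2}-\frac{j}{m}\bigr)=\frac{p(p+1)(3m\sqrt{T/2}-2p-1)}{6m^2}\sim\frac{m\sqrt{T/2}(m\sqrt{T/2}+1)^2}{6m^2}$, a sum of products of distances --- not, as you conjectured, as (sum of curvature infima) times a uniform squared distance. In the paper, Lemma \ref{helper3} serves only to bound the in-interval constant $C_5$ in the proof of Theorem \ref{main}; it plays no role in Lemma \ref{helper2}.
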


\begin{proof}
From Lemma \ref{helper} we know for all $l$ s.t. $l/n\not\in S_i$ we have 
\begin{align*}W_{ij}&(x_i)-W_{ij}(l/n)\ge \min\{W_{ij}(x_i)-W_{ij}(s_{i_{j+p}}-\sqrt{2T}), W_{ij}(x_i)-W_{ij}(s_{i_j}+\sqrt{2T})\}.
\end{align*}

Let's start with 
\begin{align*}
W_{ij}&(x_i)-W_{ij}(s_{i_{j+p}}-\sqrt{2T}) \\
&= -g'(s_{i_{j+p}}-x_i)\left(g(x_i-s_{i_j})-g(s_{i_{j+p}}-\sqrt{2T}-s_{i_j})\right)\\
&\hspace{2in}-g'(x_i-s_{i_j})\left(g(s_{i_{j+p}}-x_i)-g(\sqrt{2T})\right).
\end{align*}

Now, $g(z)$ is concave down for $z\in[-\sqrt{2T}, \sqrt{2T}]$ and $\lambda$-strongly concave on the interval $[-\sqrt{T/2}, \sqrt{T/2}]$ with $\lambda=\frac{-7}{16\sqrt{4\pi}T^{1.5}}e^{\frac{-1}{8}}$ so 
\[
g(x)-g(y)
\begin{cases}
\ge-g'(x)(y-x) & \text{for } x,z\in[-\sqrt{2T}, \sqrt{2T}]\\
\ge-g'(x)(y-x)+\frac{7}{32\sqrt{4\pi}T^{1.5}}e^{\frac{-1}{8}}(y-x)^2  & \text{for } x,z\in[-\sqrt{T/2}, \sqrt{T/2}]
\end{cases}
\]
Thus, since $|s_{i_{j+p}}-s_{i_j}|=p/m\sim\sqrt{T/2}$ and $|x_i-s_{i_j}|\le \sqrt{T/2}$ and $|s_{i_{j+p}}-x_i|\le \sqrt{T/2}$ we have 
\begin{align*}
W_{ij}(x_i)&-W_{ij}(s_{i_{j+p}}-\sqrt{2T})\\ 
&\ge -g'(s_{i_{j+p}}-x_i)\Bigg(-g'(x_i-s_{i_j})(s_{i_{j+p}}-x_i-\sqrt{2T})\\
&\hspace{1.3in}+\frac{7}{32\sqrt{4\pi}T^{1.5}}e^{\frac{-1}{8}}(s_{i_{j+p}}-x_i-\sqrt{2T})^2\Bigg)\\
&\hspace{1.8in}-g'(x_i-s_{i_j})\left(-g'(s_{i_{j+p}}-x_i)(\sqrt{2T}-s_{i_{j+p}}-x_i)\right)\\
&= g'(s_{i_{j+p}}-x_i)g'(x_i-s_{i_j})\frac{7}{32\sqrt{4\pi}T^{1.5}}e^{\frac{-1}{8}}(s_{i_{j+p}}-x_i-\sqrt{2T})^2\\
&\ge (s_{i_{j+p}}-x_i)(x_i-s_{i_j})e^{\frac{-(s_{i_{j+p}}-x_i)^2-(x_i-s_{i_j})^2}{4T}}\frac{7}{512\sqrt{4\pi}T^{3.5}}e^{\frac{-1}{8}}\\
&\ge (s_{i_{j+p}}-x_i)(x_i-s_{i_j})e^{\frac{-1}{8}}\frac{7}{512\sqrt{4\pi}T^{3.5}}e^{\frac{-1}{8}}\\
\end{align*}
where the last inequality follow since $0\le x_i-s_{i_j} = \sqrt{T/2}-(s_{i_{j+p}}-x_i)\le\sqrt{T/2}$. Now, 
\begin{align*}
\sum_{j=1}^p (s_{i_{j+p}}-x_i)(x_i-s_{i_j}) &\ge \sum_{j=1}^p \frac{j}{m}(\sqrt{T/2}-\frac{j}{m})\\
&= \frac{p(p+1)(3m\sqrt{T/2}-2p-1)}{6m^2}\\
&\sim \frac{m\sqrt{T/2}(m\sqrt{T/2}+1)^2}{6m^2}
\end{align*}
\end{proof}

We can now turn to our proof of the upper bound on the EMD error of Algorithm \ref{nonprivatealgo}.

\begin{proof}[Proof of Theorem \ref{main}]
Note that from Lemma \ref{l2noise}, w.h.p.~$f_0$ is a feasible point so $\|\hat{f}\|_1~\le~k$. Let $S_i=(x_i-\sqrt{T/2}, x_i+\sqrt{T/2})\cap[0,1]$ for $i\in[k]$. Then we have that the $S_i$'s are disjoint and each interval $S_i$ contains $\sqrt{2T}m$ sensors. Let $p=\floor{\sqrt{T/2}m}$ and let $s_{i_{1}}<\cdots<s_{i_{p}}$ be the locations of the sensors in $S_i$ to the left of $x_i$ and $s_{i_{p+1}}>\cdots>s_{i_{2p}}$ be the locations to the right. By Condition \ref{far} we know that for any pair $l\in T_i$ and $s_{c_j}$ where $i\neq c$ we have $|l/n-s_{c_j}|\ge A$. 

For all $i\in [k]$ and $j\in [p]$, let \[W_{ij}(z) = -g'(s_{i_{j+p}}-x_i)g(z-s_{i_j})-g'(x_i-s_{i_j})g(s_{i_{j+p}}-z).\]
Let $T_i$ be the set of all $l\in[n]$ such that $l/n\in (x_i-\frac{|x_i-x_{i-1}|}{2})\cap[0,1]$ then $$g(x_i-s_{i_j})-\sum_{l\in T_i}\hat{f}_lg(l/n-s_{i_j})\le y_{i_j}-\hat{y}_{i_j}+\frac{\|\hat{x}\|_1}{\sqrt{4\pi T}}e^{\frac{-A^2}{4T}}.$$ Therefore, 
\begin{align*}
\sum_{i=1}^k&\sum_{j=1}^p \left|W_{ij}(x_i)-\sum_{l\in T_i} \hat{f}_lW_{ij}(l/n)\right| \\
&= \sum_{i=1}^k\sum_{j=1}^p \Bigg[-g'(s_{j+p}-x_i)\left|g(x_i-s_{i_j})-\sum_{l\in T_i}\hat{f}_lg(l/n-s_{i_j})\right|\\
&\hspace{2in}-g'(x_i-s_{i_j})\left|g(s_{i_{j+p}}-x_i)-\sum_{l\in T_i}\hat{f}_lg(s_{i_j}-l/n)\right|\Bigg]\\
&\le C_2\left[\|y-\hat{y}\|_1+\frac{2p\|\hat{x}\|_1}{\sqrt{4\pi T}}e^{\frac{-A^2}{4T}}\right] \\
&\le C_2\left[\sqrt{m}\|y-\hat{y}\|_2+\frac{2p\|\hat{x}\|_1}{\sqrt{4\pi T}}e^{\frac{-A^2}{4T}}\right] \\ 
&\le C_2\left[2\sigma m+\frac{mk}{\sqrt{2\pi}}e^{\frac{-A^2}{4T}}\right]
\end{align*}
where $C_2=\max_{i\in[k]}\max_{j\in[2p]}[-g'(|s_{i_j}-x_i|)]$ and the last inequality holds with high probability from Lemma \ref{l2noise}. Also, 
\begin{align*}
W_{ij}(x_i) &\le -g'(s_{i_{j+p}}-x_i)g(x_i-s_{i_j})-g'(x_i-s_{i_j})g(s_{i_{j+p}}-x_i)\\
&\le \frac{1}{16\pi T^2}|s_{i_{j+p}}-x_i|+\frac{1}{16\pi T^2}|x_i-s_{i_j}|\\
&\le \frac{1}{8\pi T^{1.5}}
\end{align*}
Therefore, since $\sum_{\ell\in T_i}\hat{f_{\ell}}\le 1$ we have
\[\sum_{i=1}^k\sum_{j=1}^p \left|W_{ij}(x_i)-\sum_{l\in T_i} \hat{f}_lW_{ij}(l/n)\right|\le \min\left\{\frac{km}{8\pi T}\;\;,\;\; C_2\left[2\sigma m+\frac{mk}{\sqrt{2\pi}}e^{\frac{-A^2}{4T}}\right]\right\} = B.\]

Conversely, by Lemma \ref{helper}, we have
\begin{align*}
&\sum_{i=1}^k\sum_{j=1}^p \left|W_{ij}(x_i)-\sum_{l\in T_i} \hat{f}_lW_{ij}(l/n)\right| \\
&\ge \sum_{j=1}^p\left[\sum_{i=1}^k \left(1-\sum_{l\in T_i}\hat{f}_l\right) W_{ij}(x_i)+\sum_{i=1}^k\sum_{l\in T_i}\hat{f}_l(W_{ij}(x_i)-W_{ij}(l/n))\right]\\
&\ge \sum_{j=1}^p\sum_{i=1}^k \left(1-\sum_{l\in T_i}\hat{f}_l\right)W_{ij}(x_i)+\sum_{j=1}^p\sum_{i=1}^k\sum_{l:l/n\in S_i}\hat{f}_l C_1^j(x_i-l/n)^2 + \sum_{i=1}^k \sum_{l: l/n\not\in S_i} C_3\hat{f}_l\\
&\ge \sum_{j=1}^p\sum_{i=1}^k \left(1-\sum_{l\in T_i}\hat{f}_l\right)W_{ij}(x_i)+C_5\sum_{i=1}^k\sum_{l:l/n\in S_i}\hat{f}_l (x_i-l/n)^2 + \sum_{i=1}^k \sum_{l: l/n\not\in S_i} C_3\hat{f}_l
\end{align*}
where $C_3=\min_{i\in [k]}\min_{l:l/n\not\in S_i} \sum_{j=1}^p (W_{ij}(x_i)-W_{ij}(l/n))\ge 0.$ 

Now, by the uniformity of the sensor locations, $W_j = W_{i_j}(x_i)=W_{i'_j}(x_{i'})$ so \[\sum_{j=1}^p\sum_{i=1}^k \left(1-\sum_{l\in T_i}\hat{f}_l\right)W_{ij}(x_i)= \sum_{j=1}^p\sum_{i=1}^k \left(1-\sum_{l\in T_i}\hat{f}_l\right)W_j\ge \sum_{j=1}^p W_j(k-\|\hat{f}\|_1)\ge 0.\] Similarly the other two terms are both positive. Therefore, $\sum_{i=1}^k\sum_{l:l/n\not\in S_i}\hat{f}_l\le B/C_3$ or equivalently, $$\sum_{i=1}^k\sum_{l: l/n\in S_i}\hat{f}_l\ge \|\hat{f}\|_1-\min\{k, B/C_3\}  .$$
This implies that most of the weight of the estimate $\hat{f}$ is contained in the intervals $S_1, \cdots, S_k$. Also, 
\begin{align*}
B&\ge \sum_{j=1}^p|W_{ij}(x_i)-\sum_{i\in T_i}\hat{f}_lW_{ij}(l/n)|\\
& \ge \sum_{j=1}^pW_{ij}(x_i)-\sum_{i\in T_i}\hat{f}_lW_{ij}(x_i)\\
&\ge \left(\sum_{j=1}^p W_{ij}(x_i)\right)\left(1-\sum_{l\in T_i}\hat{f}_l\right) = C_4\left(1-\sum_{l\in T_i}\hat{f}_l\right)
\end{align*}
Therefore, $\sum_{l\in T_i}\hat{f}_l\ge 1-\min\{1, B/C_4\}$ and 
\begin{align*}
\sum_{l: l/n\in S_i}\hat{f}_l&\le \sum_{l\in T_i}\hat{f}_l = \|\hat{f}\|_1-\sum_{a\neq i}\sum_{l\in T_a}\hat{f}_l \\
&\le \|\hat{f}\|_1-(k-1)(1-\min\{1, B/C_4\})\le 1+(k-1)\min\{1, B/C_4\}.
\end{align*}
This implies that the weight of estimate $\hat{f}$ contained in the interval $S_i$ is not too much larger than the true weight of 1. Also, $$\frac{\sum_{l:l/n \in S_i}\hat{f}_l}{\|\hat{f}\|_1}\le \frac{1+(k-1)\min\{1, B/C_4\}}{k(1-\min\{1, B/C_4\})}=\frac{1}{k}+\frac{\min\{1, B/C_4\}}{1-\min\{1, B/C_4\}}$$

In order to upper bound the EMD$(\frac{f_0}{k}, \frac{\hat{f}}{\|\hat{f}\|_1})$ we need a flow, we are going to assign weight \[\min\{\frac{\sum_{l:l/n \in S_i}\hat{f}_l}{\|\hat{f}\|_1}, \frac{1}{k}\}\] to travel to $x_i$ from within $S_i$. The remaining unassigned weight is at most \[k\frac{\min\{1, B/C_4\}}{1-\min\{1, B/C_4\}}+\frac{\min\{1, B/C_3\}}{k(1-\min\{1, B/C_4\})}\] and this weight can travel at most 1 unit in any flow. Therefore, 
\begin{align}
\text{EMD}\left(\frac{f_0}{k}, \frac{\hat{f}}{\|\hat{f}\|_1}\right)&\le \sum_{i=1}^k\sum_{l:l/n\in S_i}\frac{\hat{f}_l}{\|\hat{f}\|_1}|x_i-l/n|+k\frac{\min\{1, B/C_4\}}{1-\min\{1, B/C_4\}}+\frac{\min\{1, B/C_3\}}{k(1-\min\{1, B/C_4\})}\nonumber\\
&\le \frac{1}{k(1-\min\{1, B/C_4\})}\sqrt{\frac{B}{C_5}}+k\frac{\min\{1, B/C_4\}}{1-\min\{1, B/C_4\}}+\frac{\min\{1, B/C_3\}}{k(1-\min\{1, B/C_4\})} \label{prelimbound}
\end{align}

Now, we need bounds on $C_1$, $C_2$, $C_3$ and $C_4$. Firstly, recall \[C_1=\inf_{z\in[s_2-\sqrt{2T}, s_1+\sqrt{2T}]}[-W''(z)/2]>0.\] The sensors $s_{i_j}$ and $s_{i_{j+p}}$ are at a distance of $p/m$ and recall that we only chose the sensors such that $|x_i-s_{i_j}|\ge 1/m$. Thus, any $z\in[s_2-\sqrt{2T}, s_1+\sqrt{2T}]$ we have either $|z~-~s_{i_j}|~\le~\sqrt{T/8}$ or $|z-s_{i_{j+p}}|\le \sqrt{T/8}$ so 
\begin{align*}
-W''(z) &= g'(s_{i_{j+p}}-x_i)g''(z-s_{i_j})+g'(x_i-s_{i_j})g''(s_{i_{j+p}}-z) \ge \frac{1}{16\pi T^3}\left(\frac{1}{m}\left(1-\frac{1}{2T}\frac{T}{8}\right)e^{-\frac{1}{4T}\frac{T}{8}}\right)
\end{align*}
Therefore, $C_1\ge \frac{17e^{\frac{-1}{32}}}{512}\frac{1}{m}\frac{1}{T^3}.$ Next, \[C_2 = \max_{i\in[k]}\max_{j\in[2p]} -g'(|x_i-s_j|)\le \frac{1}{2\sqrt{4\pi T}}e^{\frac{-1}{4m^2T}}.\] By Lemma \ref{helper2} we have, \[C_3=\min_{i\in [k]}\min_{l:l/n\not\in S_i} \sum_{j=1}^p (W_{ij}(x_i)-W_{ij}(l/n))=\Omega\left(\frac{m\sqrt{T/2}(m\sqrt{T/2}+1)^2}{m^2}\frac{1}{T^{3.5}}\right).\] Finally, 
\begin{align*}
C_4 &= \sum_{j=1}^p W_{ij}(x_i)\\
&= \frac{1}{8\pi T^2}\sum_{j=1}^p (s_{i_{j+p}}-x_i)e^{\frac{-(s_{i_{j+p}}-x_i)^2-(x_i-s_{i_{j}})^2}{4T}}+(x_i-s_{i_j})e^{\frac{-(s_{i_{j+p}}-x_i)^2-(x_i-s_{i_{j}})^2}{4T}}\\
&\ge \frac{1}{8\pi T^2}e^{\frac{-1}{8}}\sum_{j=1}^p(s_{i_{j+p}}-s_{i_j})\\
&\ge \Omega\left(\frac{me^{\frac{-1}{8}}}{16\pi T}\right).
\end{align*}
Lemma \ref{helper3} gives $C_5=\Omega\left(\frac{m\sqrt{T/8}+1}{T^{2.5}}\right)$. Putting all our bounds into \eqref{prelimbound} we gain the final result.
\end{proof}

\end{document}